\newtheorem{mechanism}{Mechanism}
\begin{document}
\mainmatter
\title{Repeated Auctions with Speculators: Arbitrage Incentives and Forks in DAOs}
\titlerunning{Repeated Auctions with Speculators}  
\author{Nicolas Eschenbaum, Nicolas J. Greber}
\institute{}

\maketitle

\begin{abstract} 
We analyze the vulnerability of decentralized autonomous organizations (DAOs) to speculative exploitation via their redemption mechanisms. Studying a game-theoretic model of repeated auctions for governance shares with speculators, we characterize the conditions under which---in equilibrium---an exploitative exit is guaranteed to occur, occurs in expectation, or never occurs. We evaluate four redemption mechanisms and extend our model to include atomic exits, time delays, and DAO spending strategies. Our results highlight an inherent tension in DAO design: mechanisms intended to protect members from majority attacks can inadvertently create opportunities for costly speculative exploitation. We highlight governance mechanisms that can be used to prevent speculation.
\keywords{Decentralized Autonomous Organizations (DAOs), repeated auctions, treasury redemption, speculative bidding}
\end{abstract}

\section{Introduction}


Decentralized autonomous organizations (DAOs) are a novel governance structure enabled by blockchain technology that allows decentralized decision-making and management of collective resources. A major concern for such decentralized organizations is the risk of ``majority attacks,'' where a group holding the majority of voting rights decides to expropriate the collective resources.  To mitigate this, many DAOs allow members to exit before the decision takes effect, ensuring they receive their share of the joint assets upon departure. However, this mechanism can unintentionally attract speculators who join solely to exit with a share of the treasury.

In this paper, we study the incentives for speculators (or ``arbitrageurs'') to participate in repeated auctions for individual shares of the organization in order to profitably exploit this mechanism. We focus on the case of the Nouns DAO---one of the largest and most influential DAOs where proportional claims to the DAOs treasury are auctioned daily through unique NFTs.\footnote{In May 2023, the treasury of the DAO held about \$55 million, and the DAO has used its funds, for example, to fund a Super Bowl commercial with Bud Light in 2022 or name a rare species of frog in Ecuador. As of March 13, 2025 its treasury amounts to approximately \$6 million. See the Nouns website for the current treasury and auction at \href{https://nouns.wtf}{https://nouns.wtf}.} In September 2023, the exit from the DAO by a substantial share of its members led to a loss of (at the time) approximately \$27 million. While our analysis is centered on the Nouns DAO, the underlying economic mechanisms resemble a broader range of economic contexts, such as equity crowdfunding platforms with auction mechanisms or initial public offerings (IPOs) where participants similarly invest funds to acquire voting rights and proportional claims to organizational resources.

We study a game-theoretic model of a repeated English auction with two types of bidders: regular members (or ``Nouners'') who value their share (i.e., NFT) with some positive valuation and speculators (``Arbitrageurs'') whose valuation depends on the treasury size, conditions under which they may exit and redeem their share, and the resulting expected time until redemption.\footnote{In our model, we require a threshold number of shares to be held by speculators to force an exit.} We show that there are three types of equilibria depending on the likelihood of an exit---with an exit occurring with certainty, an exit occurring in expectation, or no exit with certainty, respectively---and characterize the conditions under which they arise. We then proceed to study four different exit mechanisms that change the share of the treasury received by members exiting and various extensions of the model that allow the organization to make exits less likely or prevent them altogether. In particular, we show how either commitment to a spending path or capping share values at the initial purchase price necessarily prevents participation by speculators.\footnote{Note that commitment is non-trivial to ensure for a decentralized organization and capping NFT values is technically challenging and adds properties to the NFTs that affect secondary markets.} We provide numerical simulations of our main findings that demonstrate how the different mechanisms and parameter values determine the equilibrium type.

Our work is closely related to the literature on auctions with speculators.  Bikhchandani and Huang in \cite{bukhchandani:huang} consider auctions where all bidders are pure intermediaries bidding solely for resale, as in treasury bill markets. Haile provides in \cite{haile} empirical evidence from timber auctions showing that the option to resell adds value for bidders and can inject a common-value element even in a private-value environment. Similar to our case, Garret and Tröger (see \cite{garratt-troger-2006}) show how a bidder with zero value who can profit by strategically bidding to win and then resell to a higher-value bidder introduces multiple equilibria to the game: besides the honest ``bid-your-value'' outcome, there exist speculative equilibria where the speculator wins at a low price and later sells to the real user.\footnote{Pagnozzi shows in \cite{pagnozzi} that, in fact, high-value bidders may sometimes prefer a speculator to win.} In contrast to this literature, we do not model an option to resell the share, but instead, we allow winners to redeem their share of the organization's assets in cash value. 

Our analysis is also related to the literature on initial public offerings (IPOs) and equity crowdfunding auctions. Many IPO participants are ``flippers'' who buy at the offering and resell quickly (``flipping'' the stock). Similarly, Lukkarinen and Schwienbacher (see \cite{lukkarinen:schwienbacher}) show that simply announcing a plan to enable secondary trading can significantly boost investor participation in equity crowdfunding.

This paper is structured as follows. Section \ref{sec:setup} develops the analytical framework. Section \ref{sec:prelim} characterizes the behavior of bidders in equilibrium. Section \ref{sec:main} provides our main results on the types of equilibria and conditions under which they arise. Section \ref{sec:ext} develops various extensions of our model and section \ref{sec:concl} concludes. All proofs are relegated to the appendix.

\section{Model Setup}\label{sec:setup}

Let $S_t$ denote the treasury stock at time $t \in 1,...,T$, with $T>1$, where $S_0 \geq 0$ is the treasury prior to the game commencing. The number of nouns at the start of the game is given by $N>0$ which are all held by nouners. At each time $t$ one new noun is auctioned off. Nouns serve as voting rights in the Nouns DAO.  We denote by $p_t$ the realized auction price at time $t$. The treasury stock therefore evolves over time according to $S_t = S_0 + \sum_{\tau = 1}^{\tau = t-1}p_\tau$.

Nouns can be redeemed for a share $\alpha_t$ of the treasury. We denote the expected value of one noun at time $t$ conditional on the history of the game $h_t$ and for an expected redemption period $t^\ast \geq t$ by $V^e_{t}(h_t)$, where we suppress the subscript of the expected redemption period $t^\ast$ for brevity.  We will refer to $V^e_{t}(h_t)$ as the redemption value. In the special case of $t^\ast=t$, we write the redemption value simply as $V^e_{t=t^\ast}(h_t) = V(h_t)$. 

All players discount future payoffs by the same discount factor $\delta \in (0,1)$. The expected redemption value is thus given by $V^e_{t}(h_t) = \delta^{t^\ast-t} \mathbb{E}[\alpha_{t} S_{t}(h_t)]$, where, as before, we suppress the subscript $t^\ast$ for brevity.

We assume that nouners value each noun according to a valuation $\tilde{v} \sim [0, \bar{v}]$, with $\bar{v}>0$ constant over time. Let $F$ denote the distribution function for $\tilde{v}$. We assume that $F$ has a density $f$ that is positive and continuous on $[0, \bar{v}]$ and is identically $0$ elsewhere. This represents the natural demand for nouns. Note that throughout, we will denote random variables with a tilde, e.g., $\tilde{v}$, and their realizations without, e.g., $v$. 

Arbitrageurs (or speculators), in turn, value the noun at the expected value of redemption  $V^e_t(h_t)$. In other words, we make a sharp distinction between nouners---who value nouns independent of their redemption value or any possible market value on a secondary market---and arbitrageurs, who only value them for the redemption value. Nouns can only be redeemed, however, after a fork occurs. We assume that only arbitrageurs want to fork and, when they do, will instantly redeem their nouns. For a fork to occur at time $t$, the number of nouns held by arbitrageurs, $A_t$, must exceed a constant forking threshold $\kappa \in (0,1)$, given by $\frac{A_t}{N+t} \geq \kappa$. At the start of the game, no arbitrageur holds a noun or $A_0 = 0$.

The timing of the model is as follows. At each time $t$, the auction commences with one arbitrageur and $n \geq 2$ nouners participating. The auction format is an English auction with a starting price of $0$. Following the auction, the realized auction price is added to the treasury, and a fork occurs if $A_t \geq \kappa (N+t)$. If it does, then all arbitrageurs fork and redeem their nouns, and the game ends. 

We model the auction format as an English auction, as this most closely resembles the auctions for nouns, in which bids can be submitted over a 24-hour period and all bids are visible at all times. Ending the game after the fork is WLOG, as all arbitrageurs leave the DAO and redeem their nouns. The continuation game is, therefore, simply a repetition of the game with a new starting treasury value $S_0$ and a number of nouns $N$. We assume that multiple nouners participate in order to guarantee that the expected auction price is always strictly positive.

The history $h_t$ contains treasury stocks and players' bids for all past periods. However, it is straightforward that the current treasury stock, number of nouns held by arbitrageurs at the beginning of the period, and sum of past prices contain all information required for players' strategies.\footnote{Depending on the specification of the share, the sum of past prices need not be part of the history. This is the case, for example, under the current pro-rata mechanic (see Section \ref{sec:main})} Let the sum of prices be denoted by $P_{t-1} = \sum_{\tau=1}^{t-1} p_\tau$. Then it is sufficient for the sequence of histories to be defined as $h_1 = \{S_0, A_0, P_0\}$, $h_2 = \{S_0, A_0, P_0, S_1, A_1, P_1\}$, etc. The set of histories is denoted by $H_t$.

Let $\tilde{b}_{i,t}(\cdot | h_t)$ denote the (possibly randomized) bid at which nouner $i \in 1,...,n$ plans to drop out at history $h_t$ as a function of her realized value $v_{i,t}$ and similarly let $\tilde{b}_{a,t}(\cdot|h_t)$ denote the arbitrageurs bid. We further denote by $\tilde{b}_{-i,t}$ the highest of the maximum bids of all players other than $i$ and by $\tilde{b}_{-a,t}$ the highest of the maximum bids of all nouners. For all bids $\tilde{b}_{i,t} \geq 0$ and $v_{i,t} \in [0,\bar{v}]$, nouner $i$'s expected payoff at time $t$ is then given by
\begin{equation}
u_{i,t}(b_{i,t}, v_{i,t}) = Prob \left[ \tilde{b}_{i,t} \geq \tilde{b}_{-i,t} \ | \ h_t \right] \left(v_{i,t}-\tilde{b}_{-i,t} \right).
\end{equation} 
Similarly, for all bids $\tilde{b}_{a,t} \geq 0$, the arbitrageur's expected payoff at time $t$ is given by
\begin{equation}
u_{a,t}(b_{a,t}, h_t) = Prob\left[ \tilde{b}_{a,t} > \tilde{b}_{-a,t} \ | \ h_t \right] \left( V^e_{t}(h_t) - \tilde{b}_{-a,t} \right).
\end{equation} Ties between nouners are broken at random\footnote{This assumption is innocuous in equilibrium, as the analysis below shows.}.

\section{Preliminaries}\label{sec:prelim}

We begin our analysis by studying the bidding behavior of players at any history $h_t$. 

\begin{lemma}\label{lem:bidding_strat}
(Bidding Strategies) In any equilibrium and at any $h_t$, 
\begin{enumerate}[i.]
    \item all nouners $i \in 1,...,n$ always bid up to their valuation of the noun, or $\tilde{b}_{i,t} = v_{i,t}$,
    \item all arbitrageurs always bid up to a cutoff value $\hat{b}_t(h_t)\geq \min \{ v_{n-1,t},V^e_t(h_t) \}$, or $\tilde{b}_{a,t}(h_t) = \hat{b}_t(h_t) \geq  \min \{ v_{n-1,t},V^e_t(h_t) \}$,
\end{enumerate} 
where $v_{n-1,t}$ denotes the second-highest value drawn among nounerns $i = 1,...,n$ at history $h_t$.
\end{lemma}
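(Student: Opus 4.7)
The plan is to dispose of part (i) with the classical weakly-dominant-strategy argument for English auctions and then exploit part (i) in part (ii) through a three-case comparison in the realized order statistics of nouner values, with the middle case requiring a continuation-value argument.

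For part (i), I would first observe that nouners have no continuation stake in the game: by assumption they value a noun only at $v_{i,t}$ and draw no utility from the treasury, and their values $\tilde v$ are i.i.d.\ across periods. So each nouner's problem reduces in period $t$ to a one-shot English auction with a privately-known valuation $v_{i,t}$, and the textbook argument applies: a bid $b<v_{i,t}$ is weakly dominated by $v_{i,t}$ because the two produce different outcomes only on histories with clearing price in $(b,v_{i,t})$, on which $v_{i,t}$ wins at strictly positive surplus while $b$ loses; symmetrically $b>v_{i,t}$ is weakly dominated because the two differ only on clearing prices in $(v_{i,t},b)$, on which $b$ wins at strictly negative surplus. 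Hence $\tilde b_{i,t}=v_{i,t}$ in every equilibrium.

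For part (ii), using part (i) I may assume that nouners drop out at their values, so with $v_{n,t}$ and $v_{n-1,t}$ denoting the highest and second-highest among the $n$ realized nouner values, the arbitrageur's cutoff $\hat b_t$ produces exactly three outcomes: she wins at price $v_{n,t}$ (collecting $V^e_t-v_{n,t}$) if $\hat b_t\ge v_{n,t}$; she loses and the winning nouner pays $\hat b_t$ if $v_{n-1,t}\le \hat b_t<v_{n,t}$; she loses and the nouner pays $v_{n-1,t}$ if $\hat b_t<v_{n-1,t}$. I would then show that $\hat b_t=V^e_t(h_t)$ is weakly optimal and hence that any equilibrium cutoff satisfies $\hat b_t\ge V^e_t\ge\min\{v_{n-1,t},V^e_t\}$. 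The argument is by case on the realization: on the event $V^e_t>v_{n,t}$, any cutoff below $V^e_t$ forgoes a strictly profitable win at $v_{n,t}$; on the event $V^e_t\le v_{n-1,t}$, every cutoff in $[0,V^e_t]$ yields the same clearing price $v_{n-1,t}$ and the same zero current payoff; on the intermediate event $v_{n-1,t}<V^e_t\le v_{n,t}$, raising $\hat b_t$ from any $b<V^e_t$ up to $V^e_t$ strictly increases the clearing price (from $\max\{v_{n-1,t},b\}$ to $V^e_t$) without altering the arbitrageur's zero current payoff.

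The main obstacle is the intermediate case, where the gain from bidding at least $v_{n-1,t}$ is purely dynamic: the arbitrageur pays nothing today but enlarges $S_{t+1}$. To close this step I would invoke monotonicity of the continuation value in the treasury stock: since $V^e_{t'}(h_{t'})=\delta^{t^\ast-t'}\mathbb{E}[\alpha_{t'}S_{t'}(h_{t'})]$ is weakly increasing in $S_{t'}$ for the redemption mechanisms studied in Section~\ref{sec:main}, the arbitrageur's continuation payoff after losing in period $t$ is weakly increasing in $S_{t+1}=S_t+p_t$. Raising her bid therefore weakly increases her total expected discounted payoff, ruling out equilibrium strategies with $\hat b_t<\min\{v_{n-1,t},V^e_t\}$.
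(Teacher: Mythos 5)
Your proof is correct and follows the same basic architecture as the paper's: part (i) is the textbook English-auction argument (the paper writes the payoff as $\int_0^{\tilde b_{i,t}}(v_{i,t}-\tilde b_{-i,t})\,d\mathcal{B}_{-i,t}$ and notes it is maximized at $\tilde b_{i,t}=v_{i,t}$; your weak-dominance phrasing is equivalent), and part (ii) in both treatments hinges on the observation that the arbitrageur's bid affects the clearing price---and hence $V^e_t(h_t)$---only when it falls in the interval $(v_{n-1,t}, v_{n,t}]$. Where you go beyond the paper is the intermediate case $v_{n-1,t}<V^e_t\le v_{n,t}$: the paper simply notes that outside that interval the price is bid-independent and concludes ``by statement (i) and the definition of the payoff,'' whereas you make the dynamic channel explicit, arguing that a higher losing bid raises $p_t$, hence $S_{t+1}$, hence the continuation value. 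That is a useful clarification, but note it only delivers a \emph{weak} improvement, so by itself it does not exclude lower cutoffs from equilibrium; the binding force comes from your first case, since the cutoff is chosen before the nouners' values are realized and the event $\{v_{n,t}<V^e_t(h_t)\}$ has positive probability under $F$, so any cutoff below $V^e_t(h_t)$ forgoes a strictly profitable win with positive probability ex ante. The paper's own proof relies on the same mechanism implicitly, so this is a matter of emphasis rather than a gap.
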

\begin{proof}
    in the appendix.
\end{proof}

Lemma \ref{lem:bidding_strat} shows that the equilibrium of the game consists of a pair of simple strategies for nouners and the arbitrageur. The first statement formalizes the standard result that it is optimal for players in an English auction to bid up to their true valuation. However, this result only directly applies to nouners, because an arbitrageur's valuation, the redemption value, is a function of her own bid and expected future bids and, thus, is determined in equilibrium. The second statement then shows that while pinning down arbitrageurs' bids in equilibrium is more complex, their optimal strategies still follow a simple structure of bidding up to a well-defined maximum bid $\hat{b}_t(h_t)$ that is at least as high as the lower of the arbitrageurs' redemption value and the second-highest nouner valuation. 

Since all nouners always bid up to their true valuation of the noun, we can now state the expected time to fork at a given period $t$ and history $h_t$ as a function of nouners valuations since the probability of arbitrageurs winning in expectation simply depends on the realizations of the highest nouner valuation $v_{n,t}$.
\begin{lemma}\label{lem:forking} (Forking): In any equilibrium and at any $h_t$, a necessary condition for $t^\ast$ to exist in expectation is given by
\begin{equation}
    A_t + \sum_{\tau = t}^{t^\ast} \left( \left[F(\hat{b}_{\tau})\right]^n \right) \geq \kappa (N+t^\ast).
\end{equation}
\end{lemma}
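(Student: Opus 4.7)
The plan is to derive the inequality by computing the expected number of nouns held by arbitrageurs at time $t^\ast$ and then invoking the definition of the forking threshold $\kappa(N+t^\ast)$. The central input is Lemma \ref{lem:bidding_strat}: since every nouner drops out exactly at her realized valuation, and an arbitrageur commits to a cutoff $\hat{b}_\tau$, the arbitrageur wins the period-$\tau$ auction iff $\hat{b}_\tau$ exceeds the highest nouner valuation at time $\tau$. Because $F$ is continuous, ties have probability zero, so the event of an arbitrageur victory at time $\tau$ has probability $\Pr[\max_{i=1,\dots,n} \tilde v_{i,\tau} \le \hat b_\tau \mid h_\tau] = [F(\hat b_\tau)]^n$.

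First, I would fix a candidate expected fork time $t^\ast$ and note that arbitrageur holdings evolve as $A_{\tau+1} = A_\tau + X_\tau$, where $X_\tau \in \{0,1\}$ is the indicator of an arbitrageur victory at $\tau$. Iterating from period $t$ to $t^\ast$ gives $A_{t^\ast} = A_t + \sum_{\tau=t}^{t^\ast} X_\tau$. Taking expectations conditional on $h_t$ and substituting the winning probability computed above yields
\begin{equation*}
\mathbb{E}[A_{t^\ast} \mid h_t] = A_t + \sum_{\tau=t}^{t^\ast} [F(\hat b_\tau)]^n .
\end{equation*}

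Next, I would recall that, by definition, a fork occurs at $t^\ast$ only if the threshold condition $A_{t^\ast} \ge \kappa(N+t^\ast)$ is satisfied. Therefore, for $t^\ast$ to be a fork time in expectation, it is necessary that $\mathbb{E}[A_{t^\ast} \mid h_t] \ge \kappa(N+t^\ast)$, which is exactly the inequality in the statement. I would close by noting that this condition is only necessary, not sufficient, since it pools realizations along a given sample path and does not require the threshold to have been crossed for the first time at $t^\ast$.

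I do not anticipate a real obstacle here: the only subtle point is justifying that Lemma \ref{lem:bidding_strat} implies the per-period winning probability $[F(\hat b_\tau)]^n$ on the right measure-theoretically, and that ties (both within the nouner pool and between the highest nouner and the arbitrageur's cutoff) contribute measure zero because $F$ has a density on $[0,\bar v]$. Once this is pinned down, the remainder is a one-line linearity-of-expectation argument plus the definition of the forking threshold.
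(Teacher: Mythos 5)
Your proposal is correct and follows essentially the same route as the paper: by Lemma \ref{lem:bidding_strat} the arbitrageur wins period $\tau$ exactly when $\hat b_\tau$ exceeds the highest nouner valuation, so the expected number of wins is $\sum_\tau [F(\hat b_\tau)]^n$, and the inequality is then just the forking-threshold definition applied in expectation. The paper's proof is a one-liner stating exactly this; your added care about ties being measure zero (since $F$ has a density) is a harmless elaboration, not a departure.
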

\begin{proof} 
in the appendix.
\end{proof}

Lemma \ref{lem:forking} shows that the expected time to fork (and the existence of a forking period at all) is driven by the distribution of nouners' valuations and the expected redemption value of arbitrageurs. The higher the redemption value, the more likely it is for the arbitrageur to win since arbitrageurs' will always at least outbid nouners whenever possible by Lemma 1. But an arbitrageur's win in expectation, therefore, only depends on the probability of their expected redemption value being above the highest nouner valuation; thus, they are simply a function of the distribution function of nouners' valuations.

Specifically, the probability of winning conditional on the expected redemption value and resulting optimal maximum bid, $F(\hat{b}_{t})$, must be greater than $\kappa$ for the game to move closer to a fork occurring in expectation. If it does, then, in expectation, arbitrageurs' wins accumulate faster than the 'cost' of a larger number of periods having been played and, thus, the required number of arbitrageur wins increasing (as a function of $\kappa$). But note that Lemma \ref{lem:forking} only provides a necessary condition. It does not guarantee that a fork will occur. The critical function of the condition in Lemma \ref{lem:forking} and the fact that it only guarantees a fork in expectation is illustrated in Figure \ref{fig:one}. Note that the displayed evolution of the expected number of arbitrageur wins, $E[A_t]$, is illustrative only.

\begin{figure}[htp]
    \centering
    \includegraphics[width=0.8\linewidth]{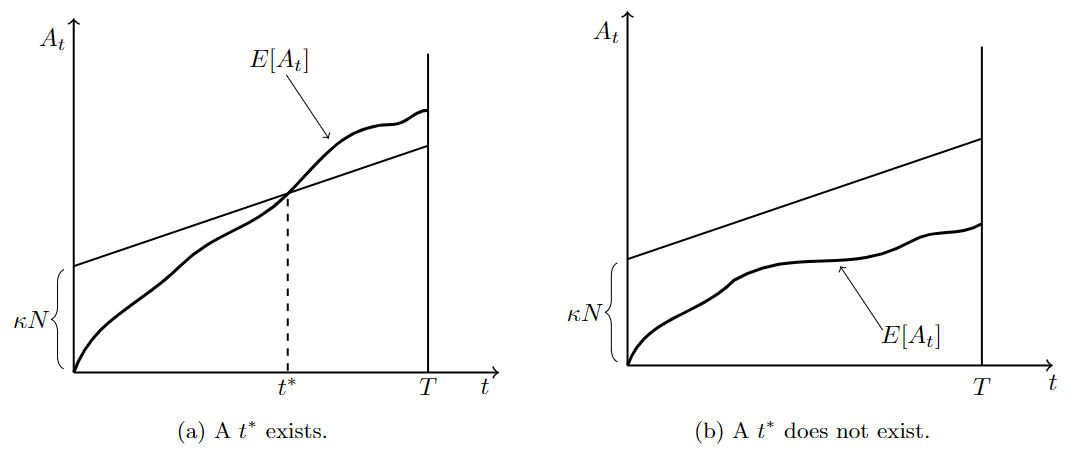}
    \caption{The relationship between the expected number of arbitrageur wins and the increasing number of nouns required to reach a fork.}
    \label{fig:one}
\end{figure}

Our final preliminary result now provides an explicit characterization of the optimal bidding behavior of arbitrageurs in equilibrium and shows how their behavior is determined by the effect an arbitrageur's bid has on the evolution of the redemption value. However, this only applies to interior solutions for the optimal maximum bid, denoted by $b_t^\ast(h_t)$. There can be equilibria in which an arbitrageur's maximum bid lies outside the range of nouners valuations. Intuitively, this would, for example, be the case if the initial treasury is so large that the expected value of a share of it is higher than any nouner is willing to pay for a noun. 

\begin{lemma}\label{lem:arbitrageur_bid} (Arbitrageur Bid): In any equilibrium and at any $h_t$, if the optimal maximal bid by an arbitrageur satisfies $b_t^\ast(h_t) \in [0,\bar{v}]$, then
\begin{equation}
b_t^\ast(h_t) = V^e_t(h_t) + \frac{1}{n} \frac{ F(\hat{b}_{a,t}(h_t))}{f(\hat{b}_{a,t}(h_t))} \frac{\partial V^e_t(h_t)}{\partial \hat{b}_{a,t}(h_t)}.
\end{equation}
\end{lemma}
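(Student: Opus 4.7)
The plan is to derive the formula as the first-order condition on the arbitrageur's expected payoff, carefully accounting for the indirect dependence of $V^e_t(h_t)$ on the arbitrageur's own cutoff bid through equilibrium continuation play.

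By Lemma~\ref{lem:bidding_strat}(i), each of the $n$ nouners drops out at her true value, so the decisive competing bid is the maximum nouner valuation $v_{n,t}$, which has distribution $[F(y)]^n$ and density $n[F(y)]^{n-1}f(y)$. The arbitrageur wins precisely when her cutoff $\hat{b} := \hat{b}_{a,t}(h_t)$ exceeds $v_{n,t}$, and in that case she pays $v_{n,t}$ (the price at which the last nouner drops out). Writing the expected payoff as a Riemann integral over the distribution of $v_{n,t}$, I would set up
\[
u_{a,t}(\hat{b}, h_t) = \int_0^{\hat{b}} \bigl( V^e_t(h_t) - y \bigr) \, n [F(y)]^{n-1} f(y) \, dy,
\]
where $V^e_t(h_t)$ is understood to depend on $\hat{b}$ through its effect on $A_t$, $S_t$, and hence on the continuation game.

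Next, I would differentiate via Leibniz's rule. The upper-limit term yields $(V^e_t - \hat{b}) \, n[F(\hat{b})]^{n-1} f(\hat{b})$, and the indirect channel contributes $\tfrac{\partial V^e_t}{\partial \hat{b}} \cdot [F(\hat{b})]^n$ after using $\int_0^{\hat{b}} n[F(y)]^{n-1}f(y)\,dy = [F(\hat{b})]^n$. Setting the sum to zero and isolating $\hat{b}$, the ratio $[F(\hat{b})]^n / \bigl(n[F(\hat{b})]^{n-1} f(\hat{b})\bigr)$ collapses to $F(\hat{b})/(n f(\hat{b}))$, which gives exactly the stated expression. The interior assumption $b_t^\ast(h_t) \in [0,\bar{v}]$ enters twice: positivity of $f$ on this interval makes the division legal, and it rules out a corner at which the FOC need not bind. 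The resulting equation is implicit, identifying $b_t^\ast$ as a fixed point.

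The main obstacle is the term $\partial V^e_t(h_t) / \partial \hat{b}_{a,t}(h_t)$, which is not a primitive of the model. A higher cutoff raises the win probability, which alters both the arbitrageur's holdings $A_t$ and the treasury $S_t$ entering the next period, and through the continuation equilibrium shifts the distribution of $t^\ast$ and of $\alpha_{t^\ast} S_{t^\ast}$. My plan is to take smoothness of $V^e_t$ in $\hat{b}$ as inherited from the continuity and positivity of $f$ on $[0,\bar{v}]$ together with the smoothness of the expectation defining $V^e_t$, so that differentiation under the integral is justified; the economic content of the formula is then the familiar marginal-benefit-equals-marginal-cost reading, with the inverse-hazard factor $F/(nf)$ arising from competing against the maximum of $n$ i.i.d.\ draws, modulated by how the bid shifts the arbitrageur's own valuation.
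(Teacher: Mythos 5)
Your proposal is correct and follows essentially the same route as the paper's proof: write the payoff as $\int_0^{\hat{b}}(V^e_t - v)\,nF(v)^{n-1}f(v)\,dv$, differentiate (picking up the boundary term and the indirect term $\tfrac{\partial V^e_t}{\partial \hat{b}}F(\hat{b})^n$), and solve the first-order condition to obtain the inverse-hazard expression. Your bookkeeping of the constants is in fact cleaner than the paper's intermediate display, which carries a stray factor of $n$ that cancels only in the final line.
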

\begin{proof}
    in the appendix.
\end{proof}

Lemma \ref{lem:arbitrageur_bid} shows that arbitrageurs optimize their bid relative to the redemption value. If the redemption value increases in their bid, then arbitrageurs may ``overbid''---that is, holding the redemption value fix for a given bid, they optimally bid more than this redemption value. If instead the redemption value decreases in their bid, then they would underbid. Whether arbitrageurs over- or underbid, therefore, depends on how the redemption value changes in their bid and the value of the hazard rate of nouners distribution at this bid level. As Lemma \ref{lem:bidding_strat} shows however, underbidding cannot arise in equilibrium since the optimal strategies of nouners are independent of arbitrageurs' strategies implying that while an arbitrageurs bid may affect the auction price (and, of course, may lead to him winning the auction), an increasing bid can never lead to a decrease in the auction price as this would require other players to react to the arbitrageurs' strategy.

\section{Main Results}\label{sec:main}

We now provide our main results on the equilibrium outcome of the game. Our first key finding studies the conditions under which a fork occurs in equilibrium.

\begin{proposition}\label{prop:equilibrium}(Equilibrium) 
The equilibrium of the game is one of three types, which are characterized by the following conditions:
\begin{enumerate}[Type I:]
    \item a fork is guaranteed to occur along the equilibrium path and the optimal maximum bid by arbitrageurs satisfies $b_t^\ast(h_t) \geq \bar{v}$ if $T \geq \frac{\kappa}{1-\kappa}N$ and $S_0 \geq \frac{\bar{v}}{\delta^{T-t} \alpha}$,
    \item a fork is expected to occur along the equilibrium path starting at history $h_t$ and the optimal maximum bid by arbitrageurs satisfies $b^\ast_t(h_t) \geq V^e_t(h_t)$ and $b_t^\ast(h_t) \in (0, \bar{v})$ if and only if 
    \begin{equation}
    \sum_{\tau=t}^{t^\ast} \left[F \left( b_t^\ast(h_t) \times g(\alpha, \tau, h_t) \right) \right]^n \geq \kappa(N+T) - A_t
    \end{equation}
    for any $t^\ast \leq T$,
    \item no fork is expected to occur along the equilibrium path starting at history $h_t$ and the optimal maximum bid by arbitrageurs satisfies $b^\ast_t(h_t) = 0$ if and only if 
    \begin{equation}
    \sum_{\tau=t}^{t^\ast} \left[F \left( b_t^\ast(h_t) \times g(\alpha, \tau, h_t) \right) \right]^n < \kappa(N+T) - A_t
    \end{equation}
    for all $t^\ast \leq T$,
\end{enumerate}
where $g(\cdot)$ is given by
$g(\alpha, \tau, h_t) = \frac{1}{\delta^{\tau-t}} \frac{\alpha_{\tau}}{\alpha_t} a_{\tau,t},
$ and \begin{subeqnarray}
            a_{\tau,t} &= \left[ \frac{\left( S_t + \sum_{\tau = t}^{t^*} \mathbb{E}[p_\tau] \right) + \frac{1}{n} \frac{F(b_t^*)}{f(b_t^*)} \sum_{\tau = t}^{t^*} \left( n F(b_t^*)^{n-1} (1 - F(b_t^*)) \right)}{ \left( S_t + \sum_{\tau = \tau}^{t^*} \mathbb{E}[p_\tau] \right) + \frac{1}{n} \frac{F(\mathbb{E}[b_{\tau}^*])}{f(\mathbb{E}[b_{\tau}^*])} \sum_{\tau = \tau}^{t^*} \left( n F(\mathbb{E}[b_{\tau}^\ast])^{n-1} (1 - F(\mathbb{E}[b_{\tau}^\ast])) \right)} \right] \\ &\geq 0.
\end{subeqnarray}
Furthermore, a sufficient condition for a Type II equilibrium is given by
\begin{equation}
    \sum_{\tau=t}^{t^\ast} \left[F \left( V^e_t(h_t) \times g(\alpha,\tau,h_t) \right) \right]^n \geq \kappa(N+T) - A_t \ \textnormal{for any} \ t^\ast \leq T.
\end{equation}
\end{proposition}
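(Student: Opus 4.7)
The plan is to establish the three equilibrium types by partitioning according to where the arbitrageur's optimal maximum bid $b_t^\ast(h_t)$ sits relative to the support $[0,\bar v]$ of nouners' valuations, and to characterize each region using Lemmas \ref{lem:bidding_strat}--\ref{lem:arbitrageur_bid}. Type I corresponds to a corner solution at the upper boundary, Type II to an interior solution, and Type III to the lower corner $b^\ast_t(h_t)=0$; these are mutually exclusive, and the stated iff-conditions for Types II and III cover all interior and non-participation cases.

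For Type I, I would argue that whenever $b^\ast_t(h_t) \geq \bar v$, Lemma \ref{lem:bidding_strat} forces the arbitrageur to win every auction with probability one, so $A_\tau = \tau$ starting from $A_0 = 0$. The forking inequality $A_\tau \geq \kappa(N+\tau)$ then reduces to $\tau \geq \kappa N/(1-\kappa)$, which is feasible inside the horizon exactly when $T \geq \kappa N/(1-\kappa)$. To show that $b^\ast_t(h_t) \geq \bar v$ is itself consistent in equilibrium, I would invoke Lemma \ref{lem:arbitrageur_bid} together with the no-underbidding observation following it to get $b^\ast_t \geq V^e_t$, and then bound $V^e_t \geq \delta^{T-t}\alpha S_0$ using monotone growth of the treasury; the hypothesis $S_0 \geq \bar v/(\delta^{T-t}\alpha)$ then closes the argument.

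For Types II and III, I would work with the interior formula in Lemma \ref{lem:arbitrageur_bid}. The central step is to re-express the expected future bid $\mathbb{E}[\hat b_\tau \mid h_t]$ as $b^\ast_t(h_t)\cdot g(\alpha,\tau,h_t)$, by (i) writing $V^e_\tau$ as the discounted expected treasury at $t^\ast$ viewed from $h_\tau$, (ii) taking conditional expectations over the interim period $[t,\tau]$ using the equilibrium expected prices $\mathbb{E}[p_\rho]$, and (iii) dividing through by the Lemma \ref{lem:arbitrageur_bid} expression for $b^\ast_t$. The discount factor $\delta^{-(\tau-t)}$ and the share ratio $\alpha_\tau/\alpha_t$ separate out cleanly, while the residual treasury-plus-hazard-correction ratio is exactly the stated $a_{\tau,t}$. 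Substituting $\hat b_\tau = b^\ast_t(h_t)\cdot g(\alpha,\tau,h_t)$ into the necessary forking condition of Lemma \ref{lem:forking}, with the threshold written as $\kappa(N+T)-A_t$ in the binding case $t^\ast=T$, yields the Type II iff-condition. Its negation for every $t^\ast\leq T$ forces Type III and justifies $b^\ast_t(h_t)=0$, because any positive bid cannot be recouped in the absence of an expected fork.

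For the final sufficient condition, I would combine $b_t^\ast(h_t) \geq V_t^e(h_t)$ (again from the no-underbidding observation) with monotonicity of $F$: replacing $b_t^\ast$ by the smaller $V_t^e$ inside each $F(\cdot)$ only weakens the left-hand side, so if the weakened inequality still meets the threshold, the original Type II condition holds a fortiori. The main obstacle throughout is the clean derivation of $g(\alpha,\tau,h_t)$: propagating Lemma \ref{lem:arbitrageur_bid} forward under conditional expectations requires careful bookkeeping of how each interim expected price feeds into both the treasury stock and the hazard-rate correction term, and verifying that the ratio of ``bid-at-$\tau$ over bid-at-$t$'' collapses algebraically into the compact $a_{\tau,t}$ given in the statement.
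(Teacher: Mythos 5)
Your proposal is correct and follows essentially the same route as the paper: deriving $g(\alpha,\tau,h_t)$ as the ratio of the Lemma \ref{lem:arbitrageur_bid} bid expressions across periods, substituting the scaled expected bids into the Lemma \ref{lem:forking} forking condition to get the Type II/III characterizations, using $b_t^\ast \geq V^e_t$ plus monotonicity of $F$ for the sufficient condition, and handling Type I as the upper-corner case via $V^e_t \geq \delta^{T-t}\alpha S_0 \geq \bar v$ together with $T \geq \kappa(N+T)$. The ``careful bookkeeping'' you flag as the main obstacle is exactly what the paper's proof carries out explicitly (computing $\mathbb{E}[p_\phi]$ from the order statistics and showing $\partial \mathbb{E}[p_\phi]/\partial \mathbb{E}[b_\phi^\ast] = nF(\cdot)^{n-1}(1-F(\cdot)) \geq 0$), but your outline identifies all the necessary steps.
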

\begin{proof}
    in the appendix.
\end{proof}

Proposition \ref{prop:equilibrium} characterizes the three possible types of equilibria and provides conditions under which the game will end in each type of equilibrium. Intuitively, the result of the game depends on whether arbitrageurs' optimal bids lie within the interval of the distribution of nouner valuations. If they do and are sufficiently high, then the accumulating expected wins by arbitrageurs can be high enough for arbitrageurs to expect a fork to occur and be willing to bid a positive amount (Type~II). If they are not sufficiently high, then in expectation, no fork will occur before the game ends, so that arbitrageurs are not willing to bid any positive amount (Type~III). Finally, if they lie outside the bounds of the distribution, then arbitrageurs will always win, and as long as the game lasts long enough for them to surpass the forking threshold, a fork is guaranteed (Type I). However, note that our conditions for Type I only characterize the case in which the initial treasury stock $S_0$ is so large that a Type I equilibrium arises. There can be intermediate parameter values that also result in Type I equilibria being played, which are not captured by our conditions.

We also note that our conditions are not precise restrictions on the model primitives that must be fulfilled for the optimal bid by arbitrageurs to result in the respective type of equilibrium, but rather are expressed in terms of the optimal arbitrageur bid and thus the redemption value (which is a function of the model primitives), because the expected time to fork based on the sequence of win probabilities cannot be solved analytically. However, the conditions show that in order to characterize the equilibrium outcome of the model, it is sufficient to check whether, at the beginning of the game, arbitrageurs expect a fork to occur before the game ends. Moreover, to specifically check for a Type III equilibrium, the sufficient condition we provide, which is only a function of the redemption value rather than the full optimal bid by arbitrageurs, can be calculated instead.

Figure \ref{fig:two} illustrates the equilibrium mechanics of the game. The expected optimal bids in future periods are increasing in line with the discount factor, starting with the optimal bid in period 1 (left panel), resulting in an increasing probability of winning (right panel). The total accumulated win probabilities of arbitrageurs (shaded area) represent the expected number of arbitrageur wins, and if this is sufficiently high, then a fork is expected to occur. In this case, the positive bids by arbitrageurs are indeed equilibrium play, and we are in a Type II equilibrium. The same logic holds true starting at any other period $t$ instead of period 1 or for a forking period that is expected to occur before time $T$. In a Type III equilibrium, we instead observe an optimal arbitrageur bid of zero and the corresponding probability of winning of zero, while in a Type I equilibrium, the optimal bid lies above the bound of nouners valuations $\bar{v}$ (and so does each future optimal bid) and thus the winning probability is always equal to one. Note that we display the bids and accumulating probability with continuous functions while---in our model---time is discrete. 
\begin{figure}
    \centering
    \includegraphics[width=0.8\linewidth]{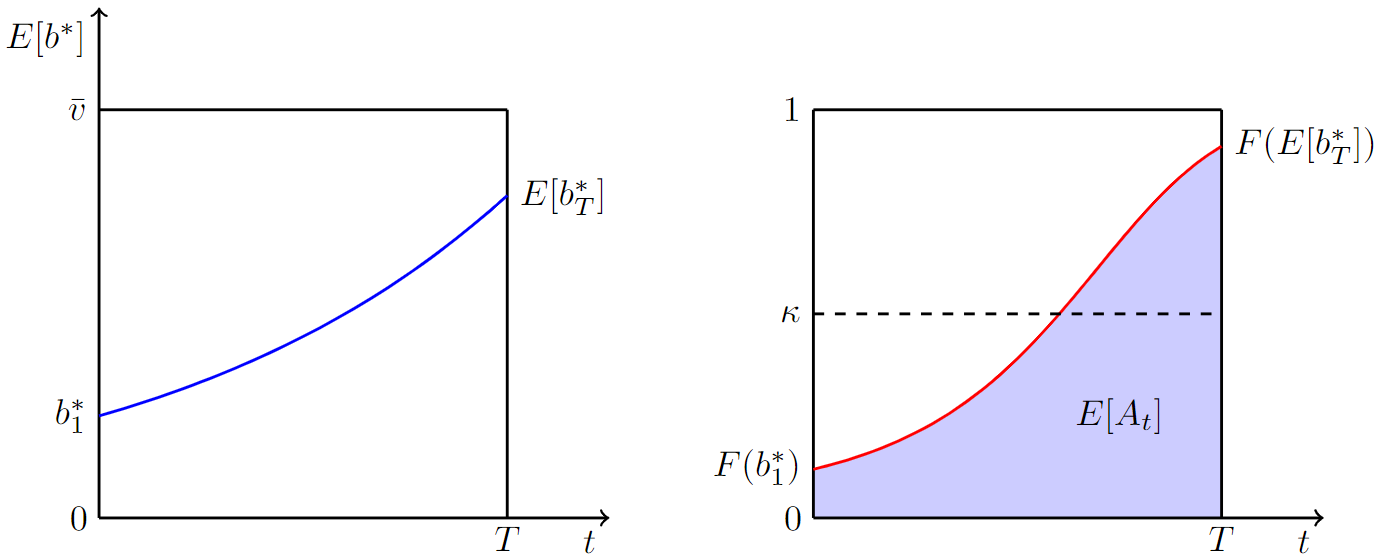}
    \caption{Example of expected optimal future bids at period 1 and resulting expected accumulating arbitrageur wins over time for the case of nouners valuations being distributed according to a normal distribution.}
    \label{fig:two}
\end{figure}
Lastly, consider that the conditions stated in Proposition \ref{prop:equilibrium} hold for various possible assumptions relevant for the redemption value. In particular, we have not specified the share arbitrageurs receive when redeeming a noun. The conditions stated in Proposition \ref{prop:equilibrium} only assume that the share arbitrageurs receive is not decreasing in their bid, that is, $\partial \alpha_{t^\ast}/\partial b_t^\ast(h_t) \geq 0$. We now proceed to numerically solve the model under different assumptions on the share arbitrageurs receive when redeeming a noun. We begin by imposing the current mechanism in the DAO on the model, specifically:
\begin{mechanism}\label{mech:one}
    (Pro-rata share): $\alpha_t = \frac{1}{N+t}$.
\end{mechanism}

When redeeming a noun, players receive the pro rata share of the entire treasury. Table \ref{tab:one} in the appendix shows for the case of Mechanism \ref{mech:one} how different values for the forking threshold $\kappa$ and the initial treasury $S_0$ affect the outcome of the game for one particular set of parameter values and distribution function of nouners. As expected, lower forking thresholds or higher initial treasuries reduce the redemption value and, hence, the probability of a fork and expected time to fork. 

In this setting, all three types of equilibria are possible, and even with an initial treasury of zero, a fork can still occur. This is because it can be profitable for arbitrageurs to play relatively low but positive maximum bids, resulting in nouners winning in expectation for a number of rounds and paying their prices into the treasury. Once arbitrageurs are able to force a fork, the nouners' paid prices result in a net gain for arbitrageurs. However, the forking threshold ($\kappa$) must be quite low for this to be an equilibrium strategy. 

The simulation runs for Table \ref{tab:one} in the appendix were conducted with the following parameters: the maximum number of periods was set to $T = 30$, with the number of initial nouners $N = 10$ and a discount factor $\delta = 0.95$. The number of nouners participating in the auctions was set at $n = 2$ and the upper bound of the uniform distribution at $\bar{v} = 1$.

The second assumption on the share arbitrageurs receive when redeeming the noun that we study  is the following:
\begin{mechanism}\label{mech:two}
    (Pro-rata share with tax): $\alpha_t = c \times \frac{1}{N+t}$ with $c\in (0,1)$.
\end{mechanism} 

In this case, arbitrageurs only receive a fraction of their pro-rata share. The remaining sum either stays in the DAOs treasury or is burned. It is straightforward that all results from the pro-rata mechanism continue to apply, as it is simply the special case of Mechanism \ref{mech:two} for the case that $c=1$. If the remaining sum stays in the DAOs treasury, our analysis is exactly identical to before. Whereas if the remaining sum is burnt, then all equilibrium mechanics continue to remain the same, but the starting initial treasury following the fork is lower.

Tables \ref{tab:two} and \ref{tab:three} in the appendix document the numerical results for Mechanism \ref{mech:two}. The effect of setting a tax is similar to the effect of a higher $\kappa$ or lower initial treasury: it reduces the redemption value and, thereby, the willingness to pay by arbitrageurs. In our simulations, the tax parameter ($c$) was set to $0.75$ and $0.5$, equivalent to a 25\% and 50\% tax, respectively. 

As the tax rate increases, equilibrium conditions are met at lower initial treasury levels or higher forking thresholds. Lowering $c$ makes forking less likely and extends the time to fork. This effect is less pronounced than varying the initial treasury because even a smaller share of a positive treasury remains valuable, whereas changing the initial treasury size more directly impacts the profitability of bidding for arbitrageurs.

The third forking mechanism that we consider is the following, where we denote by $j \leq t$ the time period in which a noun was bought:

\begin{mechanism}\label{mech:three}
    (Contribution-based share 1): $\alpha_{j,t}(h_t) = \frac{p_j}{\sum_{\tau=1}^{t} p_{\tau}}$. 
\end{mechanism}

In contrast to the previous two mechanisms, under Mechanism \ref{mech:three} the share received after a fork does not depend on the number of nouns. Instead, it is a function of the price paid for the noun or the contribution the owner of the noun has made to the size of the treasury. 

Tables \ref{tab:four} and \ref{tab:five} document the simulation results for Mechanism \ref{mech:three}. Compared to the pro-rata mechanism, Mechanism \ref{mech:three} leads to more "extreme" equilibria, that is, Type II equilibria become very unlikely and in almost all scenarios, arbitrageurs either are guaranteed to win or do not participate in the auction (i.e., Type I and Type III equilibria). The reason for this finding is that under Mechanism \ref{mech:three}, arbitrageurs are guaranteed to always obtain at least the price they paid. So they may as well bid high enough to guarantee themselves a win, whenever they can obtain a positive payoff by forking. However, they will only obtain this price at some future forking period $t^\ast$. Thus, due to discounting, they will value this return slightly less than the price they paid and Type II equilibria will become possible again. But this can only be the case under very specific parameter combinations, making Type II equilibria very rare.

The final mechanism that we consider is the following:

\begin{mechanism}\label{mech:four} (Contribution-based share 2): $\alpha_{j,t}(h_t) = \min \left\{ \frac{p_j}{S_{t}}, \frac{p_j}{\sum_{\tau=1}^{t} p_{\tau}} \right\}$.
\end{mechanism}

Compared to Mechanism \ref{mech:three}, Mechanism \ref{mech:four} puts a cap on the share arbitrageurs may receive. Just like in Mechanism \ref{mech:three}, arbitrageurs receive a share proportional to the auction price they paid relative to the sum of auction prices, $p_j/P_t$. But whenever this share exceeds the ratio of their paid auction price to the total value of the treasury, they obtain this second share instead. Clearly, this would be the case when the initial treasury is non-empty.

Figure \ref{fig:three} in the appendix illustrates the logic of this mechanism. The redemption value for arbitrageurs does not continuously grow with the size of the treasury, instead it is capped at the level of their paid auction price relative to the overall size of the treasury. 
The consequence of this mechanism is stated formally in the next result.

\begin{corollary}\label{cor:mech4}
    (Mechanism 4): Assume Mechanism \ref{mech:four}. Then, no Type I or Type II equilibria can arise and no forking occurs.
\end{corollary}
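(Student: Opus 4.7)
The plan is to exploit the cap built into Mechanism \ref{mech:four} to show that the discounted redemption value from winning never strictly exceeds the price paid, so any positive bid yields a non-positive expected payoff. For an arbitrageur who wins at time $j$ at price $p_j$ with fork at $t^\ast \geq j$, the minimum in the definition of $\alpha_{j,t^\ast}$ is bounded by its first argument, giving
\[
\alpha_{j,t^\ast} S_{t^\ast} \leq \frac{p_j}{S_{t^\ast}} \cdot S_{t^\ast} = p_j,
\]
and therefore $V^e_j(h_j) = \delta^{t^\ast-j}\alpha_{j,t^\ast} S_{t^\ast} \leq \delta^{t^\ast-j} p_j \leq p_j$, with strict inequality whenever $t^\ast > j$. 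This is the structural property that the purchase price itself serves as a ceiling on the redemption amount, independent of how large the treasury grows in the interim.

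Next I would use this bound to rule out Type I and Type II equilibria. The arbitrageur's expected net payoff from winning at price $p_j$ is $V^e_j - p_j \leq (\delta^{t^\ast-j} - 1) p_j \leq 0$, so no strictly positive bid can yield a strictly positive expected payoff and the best response is $b^\ast_t(h_t) = 0$. This contradicts Type I (which requires $b^\ast_t \geq \bar{v}$) and Type II (which requires $b^\ast_t > 0$), leaving only Type III. With $b^\ast_t = 0$ in every period and nouners bidding their valuations, which are strictly positive with positive probability by Lemma \ref{lem:bidding_strat} and the assumed density on $[0,\bar{v}]$, arbitrageurs win with probability zero, so $A_t$ remains at $0$ along the equilibrium path and the fork condition $A_t \geq \kappa(N+t)$ is never met.

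The main obstacle is the boundary case $t^\ast = j$ (a same-period fork), where the above inequality binds with equality and the arbitrageur is only weakly indifferent between bidding zero and bidding positive. I would dispose of this via backward induction on $j$: because $A_0 = 0$, a same-period fork at any $j$ ultimately requires prior arbitrageur wins in strictly earlier periods $j' < j$; but those earlier winners must then have faced $t^\ast > j'$ and therefore strictly negative expected payoff, so no arbitrageur has a strict incentive to initiate the accumulation. The unravelling forces $b^\ast_t = 0$ throughout, confirming that the unique equilibrium is Type III and that no fork ever occurs.
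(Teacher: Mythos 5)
Your argument is essentially the paper's: under Mechanism \ref{mech:four} the cap forces $\alpha_{j,t^\ast}S_{t^\ast}\leq \frac{p_j}{S_{t^\ast}}\,S_{t^\ast}=p_j$, so the discounted redemption value never exceeds the price paid, any positive bid yields a non-positive expected payoff, and $b^\ast_t(h_t)=0$ follows, ruling out Type I and Type II and hence any fork. One small caveat on your handling of the $t^\ast=j$ boundary: the claim that a same-period fork requires prior arbitrageur wins fails when $\kappa(N+j)\leq 1$, where a single win crosses the threshold immediately and the payoff is exactly zero, so indifference must still be broken toward abstention --- a knife-edge the paper's own proof also glosses over by writing the margin as $p_{j}(\delta-1)$ rather than $p_{j}(\delta^{t^\ast-t}-1)$.
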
 
\begin{proof}
    in the appendix.
\end{proof}

\section{Extensions}\label{sec:ext}
\subsection{Atomic Exit}

Instead of players being only able to fork once a sufficient number of noun holders choose to fork simultaneously, players can be allowed to exit individually at any point in time (an 'atomic exit'). Our model can capture this mechanism by placing two assumptions on the framework. First, we assume that $\kappa = 0$. As a consequence, the forking condition in Lemma \ref{lem:forking} is no longer required, as it will always be fulfilled once an arbitrageur wins their auction. Second, the redemption value for arbitrageurs simplifies to 
$$
V^e_t(h_t) = V(h_t) = \alpha_t (S_{t-1} + \mathbb{E}[p_t]),
$$ as the arbitrageur can fork after winning at time $t$ immediately, and thus, her value for the noun no longer depends on future auction prices or expected future arbitrageur wins. We can then state the following result, where we also characterize the development of the treasury as forks occur (i.e. if the game does not end after the first abers' exit).

\begin{proposition}\label{prop:atomic_exit}(Atomic exit) Assume that $\kappa=0$ so that players may fork and redeem their noun at any time $t$. Then, in any equilibrium and at any history $h_t$, a fork (atomic exit) occurs in expectation if
\begin{subeqnarray}
    b^\ast(h_t) \in [0, \bar{v}] \ \ \text{and} \ \ F\left( V(h_t) \right) > F(\mathbb{E}[v_{n,t}]) ,
\end{subeqnarray}
and the expected value of the treasury follows a mean-reverting process denoted by $\tilde{S_t}$ and defined by
\begin{subeqnarray}
    \tilde{S_t} \sim \mathcal{D}(\mu_t, \underline{S}, \bar{S}),
\end{subeqnarray}
where
\begin{equation}
\mu_t = (1 - \alpha_t F(b^\ast(h_t))^n)(\mu_t + \mathbb{E}[p_t]), \quad \underline{S} > 0, \quad \bar{S} < \frac{1}{\alpha_t} \bar{v}.
\end{equation}
\end{proposition}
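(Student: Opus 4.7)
The plan is to split the proposition into two parts: first, proving that a fork occurs in expectation under the stated conditions, and second, characterizing the treasury dynamics as a mean-reverting process with the stated mean and bounds.

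For the first part, I would exploit the fact that $\kappa=0$ collapses the forking condition of Lemma \ref{lem:forking}, so a single arbitrageur win triggers an immediate fork. Hence a fork occurs in expectation at history $h_t$ whenever the arbitrageur wins with strictly positive probability in some period $\tau \geq t$. By Lemma \ref{lem:bidding_strat}, nouners bid their values, so the per-period winning probability for the arbitrageur is $F(b^\ast(h_t))^n$, which is strictly positive iff $b^\ast(h_t) > 0$. Using the closed form in Lemma \ref{lem:arbitrageur_bid} together with the simplified redemption value $V(h_t)=\alpha_t(S_{t-1}+\mathbb{E}[p_t])$, I would argue that the assumption $F(V(h_t)) > F(\mathbb{E}[v_{n,t}])$ is equivalent (by strict monotonicity of $F$ on $[0,\bar{v}]$) to $V(h_t) > \mathbb{E}[v_{n,t}]$; combined with $b^\ast(h_t)\in[0,\bar{v}]$ and the fact that $\partial V/\partial \hat{b}\geq 0$ in this setting, this forces $b^\ast(h_t) > 0$, completing the claim.

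For the second part I would compute the one-step expected treasury. Conditional on $S_t$, with probability $F(b^\ast(h_{t+1}))^n$ the arbitrageur wins and extracts $\alpha_{t+1}(S_t + p_{t+1})$, leaving $(1-\alpha_{t+1})(S_t+p_{t+1})$; with the complementary probability a nouner wins and the treasury simply absorbs the paid price. Taking expectations and collecting terms yields
\begin{equation}
\mathbb{E}[S_{t+1}\mid S_t] = (1-\alpha_{t+1}F(b^\ast(h_{t+1}))^n)(S_t + \mathbb{E}[p_{t+1}]),
\end{equation}
which at a stationary mean $\mu_t$ produces the fixed-point relation in the statement. For the bounds I would note that $\mathbb{E}[p_t] > 0$ (guaranteed by $n\geq 2$ nouners with positive-density valuations), so the treasury cannot decay to zero, giving $\underline{S}>0$; and that whenever $\alpha_t S \geq \bar{v}$ the arbitrageur's optimal bid would exceed $\bar{v}$ (by Lemma \ref{lem:arbitrageur_bid}), forcing a sure win and an immediate contraction of the treasury, which gives $\bar{S} < \bar{v}/\alpha_t$.

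The main obstacle will be formally establishing that the induced process is genuinely mean-reverting, i.e.\ that the map $S_t \mapsto \mathbb{E}[S_{t+1}\mid S_t]$ is a contraction on $[\underline{S},\bar{S}]$ with unique fixed point $\mu_t$, rather than merely writing down a suggestive recursion. A secondary complication is that $\alpha_t$ and $b^\ast(h_t)$ generally depend on $t$ and on the full history, so strict stationarity fails; I would handle this by fixing a period $t$ and treating $\alpha_t$ as locally constant, arguing that local mean-reversion around $\mu_t$ is the appropriate formalization of the distributional statement $\tilde{S}_t\sim\mathcal{D}(\mu_t,\underline{S},\bar{S})$.
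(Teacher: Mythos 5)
Your proposal is correct and follows essentially the same route as the paper: both parts rest on $b^\ast(h_t)\geq V(h_t)$ (from Lemma \ref{lem:arbitrageur_bid} and the proof of Proposition \ref{prop:equilibrium}) for the forking claim, and your one-step recursion $\mathbb{E}[S_{t+1}\mid S_t]=(1-\alpha F(b^\ast)^n)(S_t+\mathbb{E}[p_t])$ is exactly the paper's shrinkage condition $\alpha_t F(b^\ast)^n(S_{t-1}+\mathbb{E}[p_t])>\mathbb{E}[p_t]$ rearranged, with the same limiting arguments at $S\to 0$ and $S\to\bar{v}/\alpha_t$ delivering $\underline{S}$ and $\bar{S}$. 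The contraction/uniqueness issue you flag as the main obstacle is handled in the paper only by a continuity and single-crossing argument, so your concern is warranted but does not indicate a gap relative to the published proof.
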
 
\begin{proof}
    in the appendix.
\end{proof}

As Proposition \ref{prop:atomic_exit} shows, in a steady state, the treasury will follow a treasury path defined by the exit mechanism and the expected value of the second highest nouner. Any treasury values that exceed this path will immediately result (in expectation) in an arbitrageur buying a noun to exit with profit. The consequence of this change to the forking mechanic in the DAO is particularly stark under Mechanism \ref{mech:three} and \ref{mech:four}, as the following statement shows.

\begin{corollary}(Arbitrage under atomic exit)
    At any time $t$ and history $h_t$, in any equilibrium:
    \begin{itemize}
        \item under Mechanism 3, a fork occurs with certainty whenever $S_{t-1} > P_{t-1}$.
        \item under Mechanism 4, no fork occurs.
    \end{itemize}
\end{corollary}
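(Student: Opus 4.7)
The plan is to exploit the simplification in Proposition~\ref{prop:atomic_exit}---that under atomic exit with $\kappa=0$ we have $V(h_t)=\alpha_t(S_{t-1}+p_t)$ once the winning price $p_t$ is realized---and reduce each case to a single-round profitability check for the arbitrageur. Because $\kappa=0$, a single win suffices to trigger a fork, so the claims reduce to asking whether the arbitrageur's best response is to win with positive probability (and in fact with probability one in the first case).

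\textbf{Mechanism~\ref{mech:three}.} I would first substitute $\alpha_{t,t}=p_t/(P_{t-1}+p_t)$ into the redemption value and simplify the arbitrageur's net payoff conditional on winning at realized second-highest bid $p_t$ to the expression $p_t(S_{t-1}-P_{t-1})/(P_{t-1}+p_t)$. Under the hypothesis $S_{t-1}>P_{t-1}$, and since $p_t>0$ almost surely (by Lemma~\ref{lem:bidding_strat} together with the positive continuous density of $F$), this realized payoff is strictly positive in every state in which the arbitrageur wins. Hence her expected payoff is strictly increasing in her cutoff $\hat{b}_t$ throughout $[0,\bar{v}]$, so the unique best response is $\hat{b}_t\geq\bar{v}$. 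Combined with Lemma~\ref{lem:bidding_strat}, the arbitrageur outbids every nouner with probability one and, because $\kappa=0$, immediately redeems---producing a fork with certainty.

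\textbf{Mechanism~\ref{mech:four}.} For this part the task is the opposite: I would show the arbitrageur's winning payoff is weakly negative in every state, so she has no incentive to enter. Multiplying the share $\alpha_{t,t}=\min\{p_t/S_t,\,p_t/P_t\}$ by $S_t=S_{t-1}+p_t$ gives a redemption value equal to $\min\{p_t,\,p_t S_t/P_t\}$. A short case split on the sign of $S_{t-1}-P_{t-1}$ shows this minimum is always weakly below $p_t$, so the realized net payoff from winning is never positive. Since a cutoff of $0$ yields exactly zero, any positive cutoff is weakly dominated; hence in any equilibrium the arbitrageur never wins with positive probability, and no fork arises.

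The main obstacle is the Mechanism~\ref{mech:three} step---specifically, ruling out that the arbitrageur prefers to shade her cutoff strictly below $\bar{v}$. In the general model (Lemma~\ref{lem:arbitrageur_bid}) the redemption value responds to $\hat{b}_t$ through the entire future path of the game, and this indirect channel can generate interior optima. Under atomic exit that channel collapses: $\hat{b}_t$ only determines the set of winning realizations, and since each such realization is strictly profitable, the optimization reduces to the corner $\hat{b}_t\geq\bar{v}$. Once that reduction is made explicit, the remainder of the argument is routine algebra.
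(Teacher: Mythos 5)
Your proposal is correct and follows essentially the same route as the paper: both reduce each claim to a one-shot profitability check on the arbitrageur's winning payoff, showing it equals $p_t(S_{t-1}-P_{t-1})/(P_{t-1}+p_t)>0$ under Mechanism 3 when $S_{t-1}>P_{t-1}$, and is capped at zero under Mechanism 4 because the min-share bounds the redemption value by the price paid. Your realization-by-realization treatment of the Mechanism 3 step (and the explicit justification of the corner cutoff $\hat{b}_t\geq\bar{v}$) is, if anything, slightly more careful than the paper's argument in terms of $\mathbb{E}[p_t]$.
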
 
\begin{proof}
    in the appendix.
\end{proof}

With the atomic exit mechanic, arbitrageurs no longer need to consider future expected arbitrageur wins. As a consequence, because they are guaranteed under Mechanism \ref{mech:three} and \ref{mech:four} to always obtain at least the price they paid in the auction, they will simply pay enough to be guaranteed to win the auction and fork immediately, whenever it yields a strictly positive payoff. This is the case under Mechanism \ref{mech:three} and a positive initial treasury. Under Mechanism \ref{mech:four}, however, the arbitrageurs' returns are capped and can never be strictly positive, and thus, we obtain the finding from Corollary \ref{cor:mech4} again that arbitrageurs will refrain from participating in the auction altogether.

\subsection{Time Delay}

When players choose to fork and redeem their noun, the share of funds they receive can be delayed. The implementation of a vesting period significantly influences the willingness of arbitrageurs to participate. As the delay in receiving funds increases, the willingness of arbitrageurs to participate decreases. This vesting period can be represented with a parameter $\Delta \in 1,2,...,T-1$ that represents the number of periods (i.e., days) during which the funds are locked.
The redemption value then becomes:
\begin{equation}
    V^e_t(h_t) = \delta^{ t^\ast-t + \Delta} \alpha_{t^\ast} \left( S_{t-1} + \sum_{\tau = t}^{t^\ast} \mathbb{E}[p_\tau] \right).
\end{equation}

The main consequence of this change to the forking mechanism is that it reduces the immediate financial incentive for arbitrageurs due to the delayed access to funds.

It is straightforward that the results from Proposition \ref{prop:equilibrium} carry over to this scenario, as stated formally in the next result.

\begin{corollary}\label{cor:vesting}(Vesting):
Assume funds received from redeeming a noun are ves-ted for $\Delta$ periods. Then, the conditions for a Type II and Type III equilibrium from Proposition 1 continue to apply. The condition for a Type I equilibrium becomes:
\begin{enumerate}[Type I:]
    \item a fork is guaranteed to occur along the equilibrium path and the optimal maximum bid by arbitrageurs satisfies $b_t^\ast(h_t) \geq \bar{v}$ if $T \geq \frac{\kappa}{1-\kappa}N$ and $S_0 \geq \frac{\bar{v}}{\delta^{T-t+\Delta} \alpha}$.
\end{enumerate}
\end{corollary}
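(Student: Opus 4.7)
The plan is to exploit the single clean observation that a deterministic vesting delay of $\Delta$ periods simply rescales every expected redemption payoff by a common multiplicative constant $\delta^{\Delta}$. All other model primitives—the treasury recursion, the forking threshold $\kappa$, the nouner distribution $F$, the discount factor $\delta$, and the English-auction format—are unaffected. I would therefore re-read the proof of Proposition \ref{prop:equilibrium} and track how this uniform rescaling of $V^e_t(h_t)$ propagates through the three characterizations.

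First I would verify that Lemma \ref{lem:arbitrageur_bid} carries over verbatim: both $V^e_t(h_t)$ and $\partial V^e_t(h_t)/\partial \hat{b}_{a,t}$ scale by the same factor $\delta^{\Delta}$, so the first-order condition for $b_t^\ast(h_t)$ keeps the same functional form, with the new redemption value plugged in. Next I would argue that the auxiliary function $g(\alpha,\tau,h_t)$ is invariant under vesting: the explicit $1/\delta^{\tau-t}$ prefactor measures the relative discount between periods $t$ and $\tau$ and is unaffected because $\Delta$ enters additively and symmetrically into both the bid in $t$ and the expected bid in $\tau$, while $a_{\tau,t}$ is built from treasury levels and hazard-rate ratios that contain no $\delta^{\Delta}$ dependence. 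Consequently, the Type II and Type III conditions, stated as inequalities in $b_t^\ast(h_t) \times g(\alpha,\tau,h_t)$ against the threshold $\kappa(N+T)-A_t$, continue to apply verbatim, with the endogenous $b_t^\ast(h_t)$ now reflecting the vesting penalty.

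For Type I, I would re-derive the sufficient condition on $S_0$ directly. The condition $b_t^\ast(h_t) \geq \bar{v}$ is ensured whenever the arbitrageur's redemption value exceeds $\bar{v}$ at the worst-case (latest) redemption time, which occurs at $t^\ast = T$. Using the lower bound $S_{T-1} \geq S_0$ and the new redemption expression gives $V^e_t(h_t) \geq \delta^{T-t+\Delta}\alpha S_0$, so that $\delta^{T-t+\Delta}\alpha S_0 \geq \bar{v}$ yields the modified bound $S_0 \geq \bar{v}/(\delta^{T-t+\Delta}\alpha)$. The combinatorial condition $T \geq \kappa N/(1-\kappa)$ is unaffected, since it only reflects how many auctions must be won for the forking threshold $\kappa(N+t)$ to be reached once arbitrageurs outbid every nouner, and vesting does not alter this counting argument.

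The main obstacle I anticipate is making the invariance argument for Type II/III fully rigorous, specifically ensuring that no hidden $\delta^{\Delta}$ factor sneaks into $g$ when the redemption value and its derivative are substituted back into $a_{\tau,t}$. If a stray factor does appear, the fallback is to note that it is absorbed into the rescaled $b_t^\ast(h_t)$ so that the composite expression $b_t^\ast(h_t)\times g(\alpha,\tau,h_t)$ stands in exactly the same relation to the primitives as in Proposition \ref{prop:equilibrium}. Either way, because vesting acts as a uniform discount shift, the algebraic bookkeeping is the only real work.
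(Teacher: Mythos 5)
Your proposal is correct and follows exactly the route the paper takes (the paper treats this corollary as immediate and only sketches the argument in the surrounding text): the vesting delay multiplies $V^e_t(h_t)$ and its derivative by the common factor $\delta^{\Delta}$, which cancels in the ratio defining $g(\alpha,\tau,h_t)$, leaves the Type II/III conditions formally unchanged since they are stated in terms of the endogenous $b_t^\ast(h_t)$, and tightens the Type I bound on $S_0$ to $\bar{v}/(\delta^{T-t+\Delta}\alpha)$ while leaving the counting condition $T \geq \frac{\kappa}{1-\kappa}N$ untouched. No gaps.
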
 

Intuitively, because our conditions for a Type II and Type III equilibrium are a function of the optimal bid and thus the expected redemption value, their formal definition does not change, but the set of parameters for which they apply changes. Similarly, our condition for a Type I equilibrium now requires a larger initial treasury to guarantee a fork.

\subsection{Treasury Spending}

In our baseline model, all outflows from the treasury arise from players forking and ragequitting the new DAO. In practice, funds from the treasury are regularly spent on proposals. Intuitively, introducing spending into the model will make forks less likely as the treasury shrinks and incentives to participate for arbitrageurs are reduced. Fully modeling the voting mechanism that decides on spending is beyond the scope of this analysis. However, we now introduce a reduced-form spending function into our setting and analyze its effects.

Let $z_t(h_t)$ denote treasury spending at time $t$ and history $h_t$. We will assume that the level of treasury spending is only a function of the current treasury stock at the beginning of the period so that 
\begin{equation}
z_t(h_t) = z_t(S_{t-1}).
\end{equation} The treasury stock, therefore, evolves over time according to
\begin{equation}
S_t = S_{t-1} + p_t - z_t(S_{t-1}).
\end{equation} We further assume that the timing in each period is such that the auction price is added to the treasury first, then treasury spending occurs, and finally, a fork can arise. It is straightforward then that all our main results continue to hold but that the redemption value for arbitrageurs now needs to be adjusted for the expected level of treasury spending or
\begin{equation}
V^e_t(h_t) = \delta^{t^\ast-t} \alpha_{t^\ast} \left( S_{t-1} + \sum_{\tau = t}^{t^\ast} \left( \mathbb{E}[p_\tau] - \mathbb{E}[ z_\tau(S_{\tau-1}) ] \right) \right).
\end{equation} It is immediately clear that depending on the expected level of spending, forks can be prevented entirely by, for example, committing to spend the entire treasury in each period. More generally, if the DAO were to commit to a spending path over time, the same result can be obtained despite not spending everything at once. Specifically, suppose that the DAO commits to a declining spending path that takes the form of an exponential decay, or
\begin{equation}
z_t(S_{t-1}) = k S_{t-1} e^{-\lambda(t-1)},
\end{equation} where $0 < k < 1$ is a constant determining the fraction of the treasury spent in each period and $\lambda > 0$ is the decay rate. Then, we can state the following result.

\begin{proposition}\label{prop:treasury_spending} (Treasury Spending): For any given value of $S_0$, there exists a set of critical parameters $\hat{k}$, $\hat{\lambda}$, such that no Type I or Type II equilibrium can arise.
\end{proposition}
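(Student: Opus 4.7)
The plan is to show that, by choosing $k$ sufficiently close to $1$ and $\lambda$ sufficiently small, the exponential-decay spending rule drives the expected treasury trajectory so low that the redemption value $V^e_t(h_t)$ cannot support either a Type I or a Type II equilibrium from Proposition \ref{prop:equilibrium}. Concretely, I would show that the resulting optimal arbitrageur bid $b_t^\ast(h_t)$ is pushed strictly below both the ``always win'' threshold $\bar{v}$ required for Type I and the level needed for the accumulated-win-probability condition characterizing Type II, leaving only Type III as feasible.

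First, I would analyze the evolution of the expected treasury under the spending rule. Substituting $z_t(S_{t-1}) = k S_{t-1} e^{-\lambda(t-1)}$ into the treasury recursion yields
\begin{equation*}
\mathbb{E}[S_t] = \bigl(1 - k e^{-\lambda(t-1)}\bigr)\mathbb{E}[S_{t-1}] + \mathbb{E}[p_t].
\end{equation*}
Since nouners bid at most $\bar{v}$ and, by Lemma \ref{lem:arbitrageur_bid}, the arbitrageur's equilibrium bid is bounded by a function of the redemption value, I can bound $\mathbb{E}[p_t]$ uniformly. Iterating from $S_0$, the contribution of the initial treasury is damped by the product $\prod_{\tau=1}^{t}\bigl(1 - k e^{-\lambda(\tau-1)}\bigr)$, and the cumulative auction-price contribution is damped analogously. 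This yields a uniform bound $\mathbb{E}[S_t] \leq \bar{S}(k,\lambda)$, where $\bar{S}(k,\lambda)$ is continuous and monotone in $(k,\lambda)$ and tends to $0$ as $(k,\lambda) \to (1,0)$ (in the degenerate case $k=1, \lambda=0$ the treasury after the spending step equals $p_t$, so $\mathbb{E}[S_t] \leq \bar{v}$).

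Second, I would translate this bound into a bound on the redemption value: directly from its definition under spending,
\begin{equation*}
V^e_t(h_t) \leq \delta^{t^\ast-t}\alpha_{t^\ast}\bar{S}(k,\lambda),
\end{equation*}
up to a factor that is uniformly bounded in $T$. Plugging this into the Type I condition ($b_t^\ast(h_t)\geq \bar{v}$, which requires $V^e_t$ to be large) and into the sufficient Type II condition $\sum_\tau [F(V^e_t(h_t)\cdot g(\alpha,\tau,h_t))]^n \geq \kappa(N+T) - A_t$, both conditions fail once $V^e_t(h_t)$ is small enough, because $F$ is continuous with $F(0)=0$ and only finitely many terms are summed. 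A standard comparative-statics argument using monotonicity of $\bar{S}$ in $(k,\lambda)$ then yields critical values $\hat{k}$ and $\hat{\lambda}$ such that neither condition can hold whenever $k \geq \hat{k}$ or $\lambda \leq \hat{\lambda}$, so only a Type III equilibrium survives.

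The main obstacle is the fixed-point nature of the equilibrium: the spending path depends on realized prices, which depend on equilibrium bids, which depend on $V^e_t$, which in turn depends on expected future spending. I would sidestep this by working exclusively with upper bounds — bounding $\mathbb{E}[p_t]$ by the maximum of $\bar{v}$ and the bid implied by the largest candidate $V^e_t$, and then using monotonicity of $b_t^\ast$ in $V^e_t$ (from Lemma \ref{lem:arbitrageur_bid}) — so that $\bar{S}(k,\lambda)$ holds uniformly across all candidate equilibria. The resulting bound on $V^e_t$ is then itself a fixed-point inequality in $V^e_t$ alone, which collapses as $(k,\lambda)\to(1,0)$, closing the argument.
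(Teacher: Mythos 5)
Your proposal follows the same basic route as the paper's proof: aggressive spending lowers the redemption value, which (by Lemma \ref{lem:arbitrageur_bid}) lowers the optimal arbitrageur bid, which lowers the accumulated win probabilities in the forking condition, so the Type I and Type II conditions of Proposition \ref{prop:equilibrium} fail. The paper's own proof is just this chain of three monotonicity observations ($\partial V^e_t/\partial z_t<0$, $\partial b^\ast_t/\partial V^e_t>0$, and monotonicity of the sum $\sum_\tau [F(\cdot)]^n$ in $b^\ast$) asserted to "immediately deliver the result." Your version adds two things the paper omits: an explicit bound on the treasury trajectory obtained by iterating the recursion $\mathbb{E}[S_t]=(1-ke^{-\lambda(t-1)})\mathbb{E}[S_{t-1}]+\mathbb{E}[p_t]$, and an explicit treatment of the equilibrium fixed point via uniform upper bounds. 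Both additions are genuine improvements in rigor.

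However, there is a gap in your final quantitative step, and you half-notice it yourself in the parenthetical. Under the paper's timing (price added to the treasury, then spending, then the fork), taking $k\to 1$, $\lambda\to 0$ does \emph{not} drive $\bar{S}(k,\lambda)$ to zero: the post-spending treasury converges to $p_t$, so your bound is $\bar{S}\leq\bar{v}$ and hence $V^e_t(h_t)\leq\delta^{t^\ast-t}\alpha_{t^\ast}\bar{v}$, a fixed positive quantity, not an arbitrarily small one. Your closing argument ("both conditions fail once $V^e_t(h_t)$ is small enough, because $F$ is continuous with $F(0)=0$") therefore does not apply as written: you cannot make $F(V^e_t\cdot g)$ arbitrarily small. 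The Type I conclusion still goes through, since $\alpha_{t^\ast}\bar{v}<\bar{v}$ rules out $b^\ast_t\geq\bar{v}$. But ruling out Type II requires showing $\sum_{\tau}[F(b^\ast_t g(\alpha,\tau,h_t))]^n<\kappa(N+T)-A_t$ with $b^\ast_t$ only bounded by roughly $\bar{v}/(N+t^\ast)$ plus the overbidding term, and this does not hold for arbitrary $F$ and $\kappa$ (e.g., if $F$ concentrates mass near zero, arbitrageurs still win almost every auction at near-zero prices and the forking condition is met whenever $T\geq\frac{\kappa}{1-\kappa}N$). To be fair, the paper's own proof is silent on exactly this point, so your proposal is no weaker than the published argument --- but if you want a complete proof, you need either an additional assumption on $F$ and $\kappa$ or a separate argument that the residual redemption value $\alpha_{t^\ast}p_{t^\ast}$ cannot sustain positive equilibrium bids.
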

\begin{proof}
    in the appendix.
\end{proof}

Proposition \ref{prop:treasury_spending} shows that by choosing an appropriate spending path, incentives for arbitrageurs can be sufficiently reduced to prevent forking. Intuitively, by spending sufficiently quickly, arbitrageurs' payoffs by the time they have accumulated enough wins to force a fork are reduced far enough so that their optimal bids become low enough to slow down the speed at which they accumulate wins in expectation to ensure the forking condition cannot be satisfied. As a consequence, even though they could obtain a positive payoff if a fork would arise, they can never get there, so it no longer pays to participate in the auction, and only Type III equilibria remain. A similar set of parameters could also be determined at which only Type I equilibria are excluded.

\section{Conclusion}\label{sec:concl}

This paper studies the incentives for speculators to participate in repeated auctions for shares of the assets of a decentralized organization. Redemption mechanisms that allow the winner of the auction to later redeem their share for cash value incentivize speculators to enter bids with the sole objective of exploiting the redemption mechanism. We build a game-theoretic model inspired by the Nouns DAO and characterize equilibrium bidding behavior and conditions under which successful arbitrage occurs.

We show that there are three types of equilibria. In a Type I equilibrium, an exit by speculators is guaranteed along the equilibrium path because the speculators' optimal maximum bid exceeds the highest valuation of regular bidders. In a Type II equilibrium, a fork occurs in expectation, and the speculators' optimal bid falls within the range of regular bidders' valuations. Finally, in a Type III equilibrium, no fork occurs, and the speculators' optimal bid is zero because the expected redemption value is too low to participate profitably in the auction. We then proceed to explore four different redemption mechanisms and their implications for speculators' incentives. We further consider various extensions of our model, including atomic exits, time delays between exit and redemption, and spending paths of the organization, and provide numerical simulations of our main findings.

Our analysis adds to the literature on auctions with speculators by studying a setting in which speculators benefit not from a secondary market but by directly receiving their share of the organization's assets. Our results also contribute to understanding incentive structures and governance design in auction-based DAOs. The two ``innocent'' goals of (i) allowing unrestricted participation in governance by bidding for shares (and associated voting rights) and (ii) safeguarding members from majority attacks by allowing them to exit and redeem their share can combine in practice to yield unintended consequences---such as speculation and costly member exits. 

\newpage

\newpage
\appendix

\section{Additional Figures}
\begin{figure}
    \centering
    \includegraphics[width=0.4\linewidth]{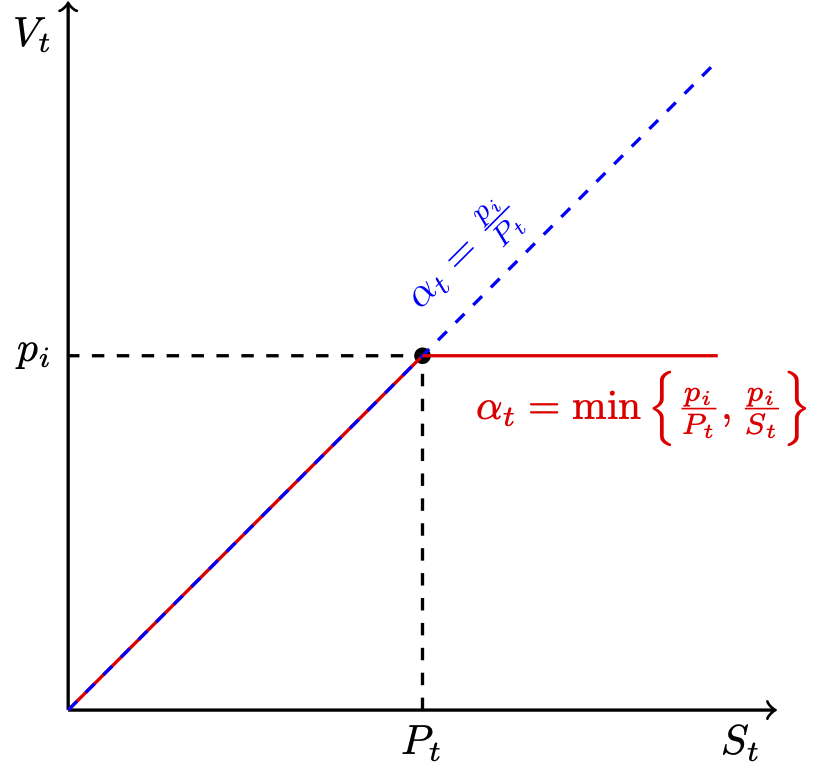}
    \caption{Contribution-based Share 1 and 2.}
    \label{fig:three}
\end{figure}

\section{Tables}

\begin{table}[htp]
    \centering
    \includegraphics[width=0.48\linewidth]{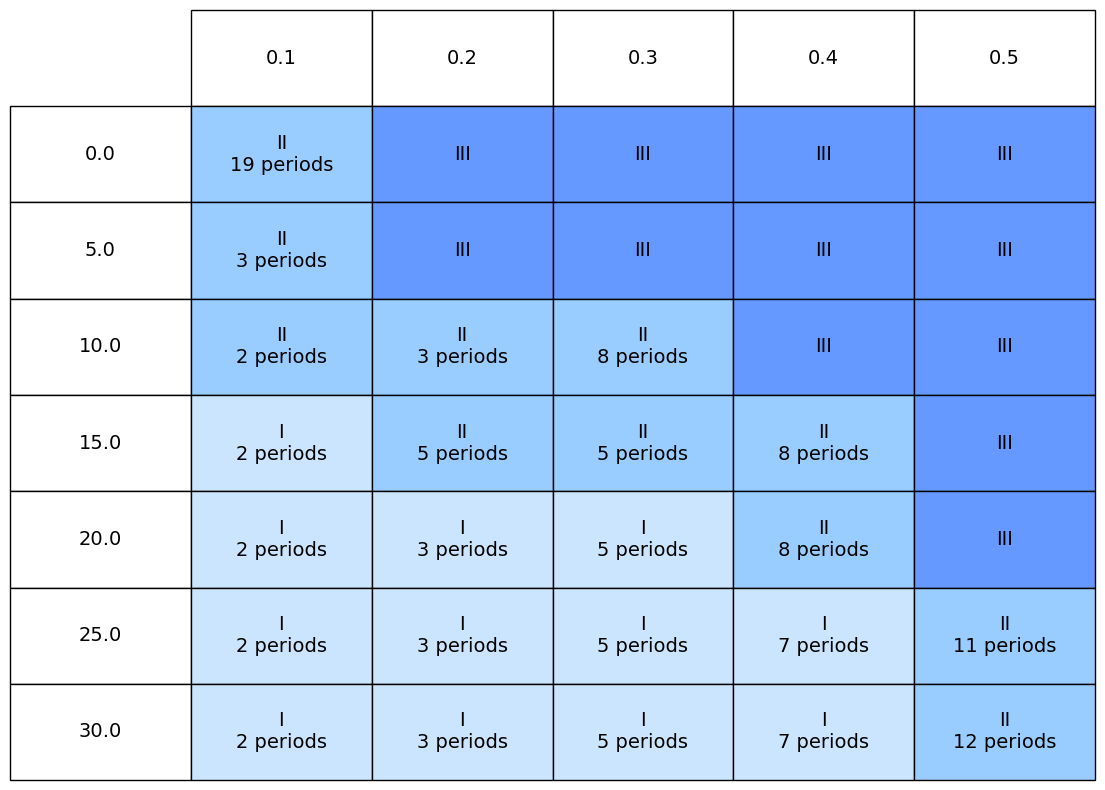}
    \vspace{15pt}
    \caption{Equilibrium types for different values of the initial treasury (vertical) and forking threshold (horizontal) and Mechanism 1.}
    \label{tab:one}
\end{table}
\begin{table}[htp]
    \centering
    \begin{subtable}{0.48\textwidth}
        \centering
        \includegraphics[width=\linewidth]{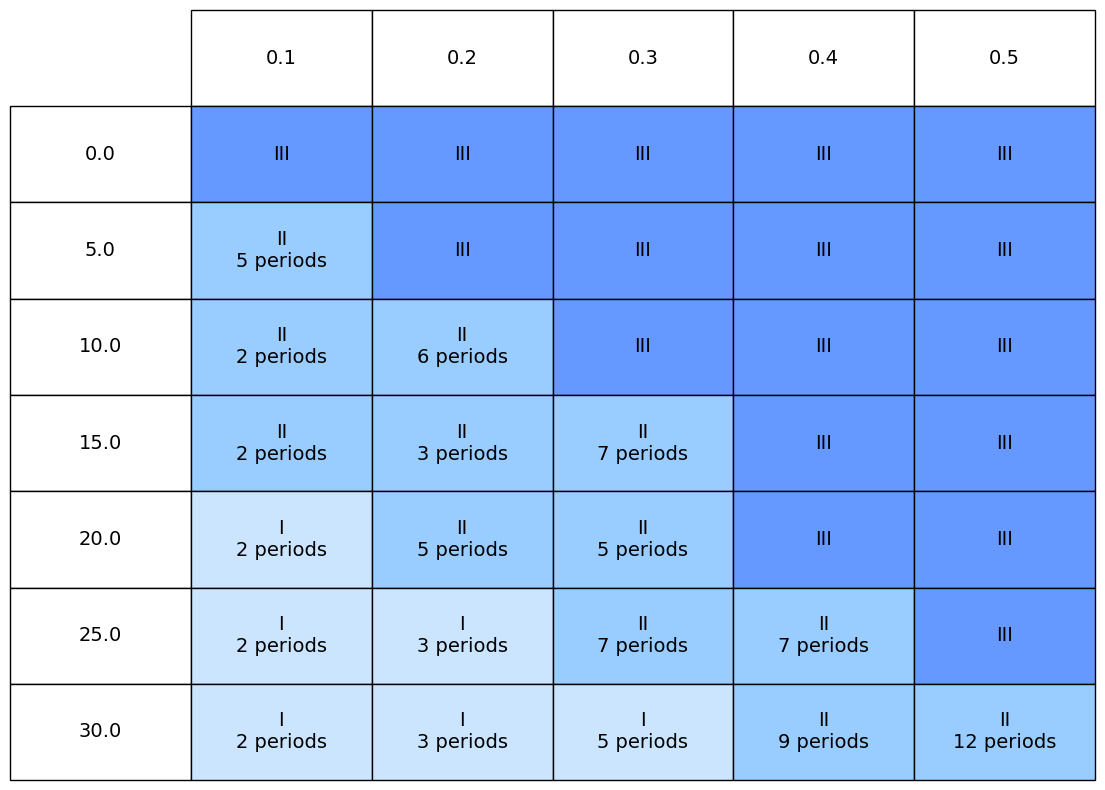}
        \caption{Tax of 25\%.}
        \label{tab:two}
    \end{subtable}
    \hfill
    \begin{subtable}{0.48\textwidth}
        \centering
        \includegraphics[width=\linewidth]{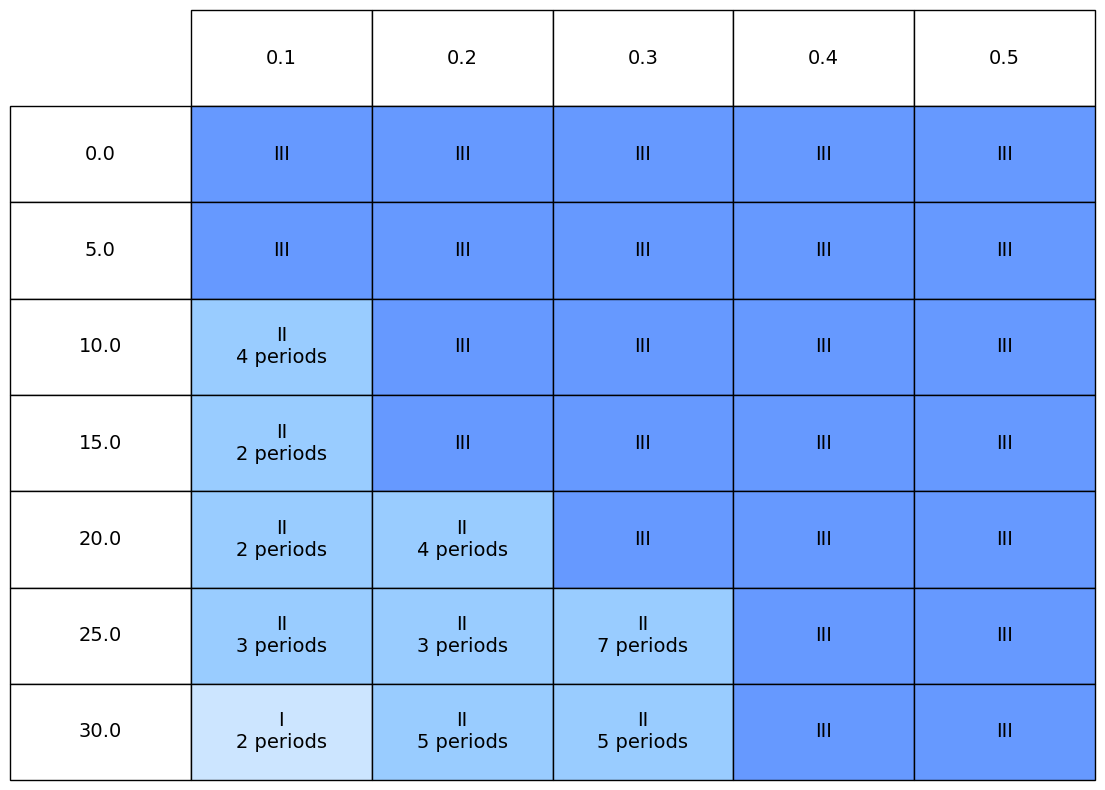}
        \caption{Tax of 50\%.}
        \label{tab:three}
    \end{subtable}
    \caption{Equilibrium types for different values of the initial treasury (vertical) and forking threshold (horizontal) and Mechanism 2.}
\end{table}
\begin{table}[htp]
    \centering
    \begin{subtable}{0.48\textwidth}
        \centering
        \includegraphics[width=\linewidth]{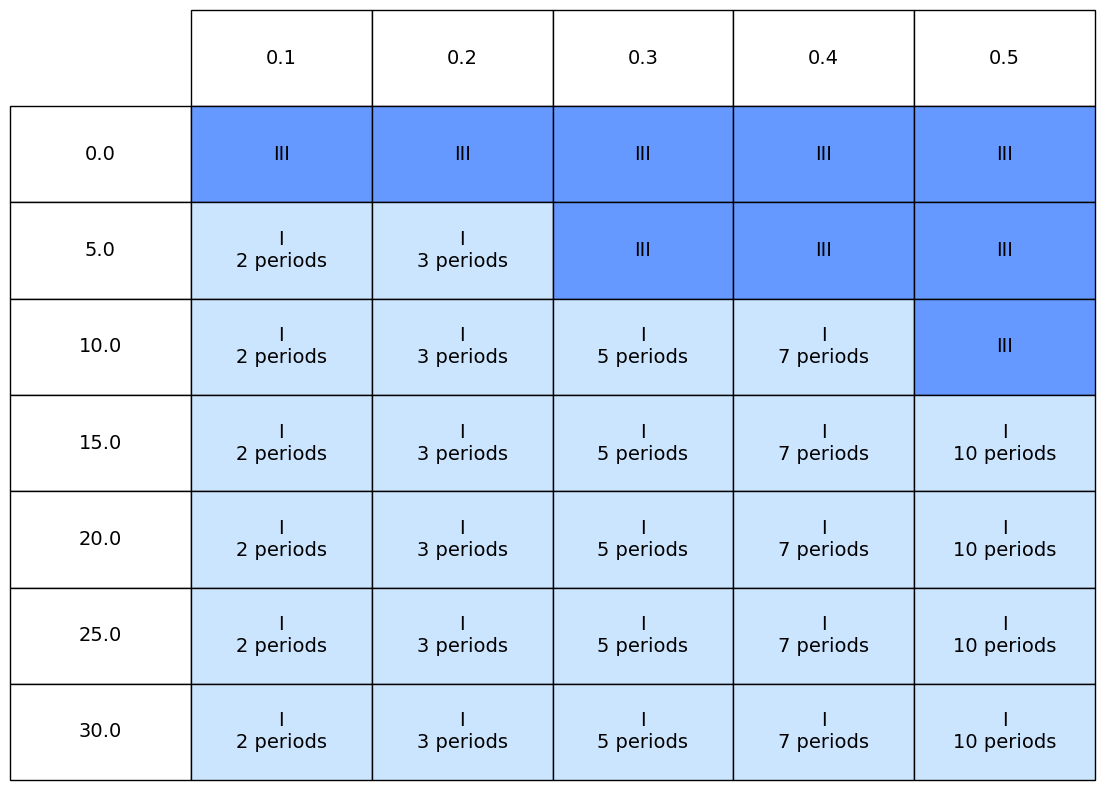}
        \caption{$P=0$}
        \label{tab:four}
    \end{subtable}
    \hfill
    \begin{subtable}{0.48\textwidth}
        \centering
        \includegraphics[width=\linewidth]{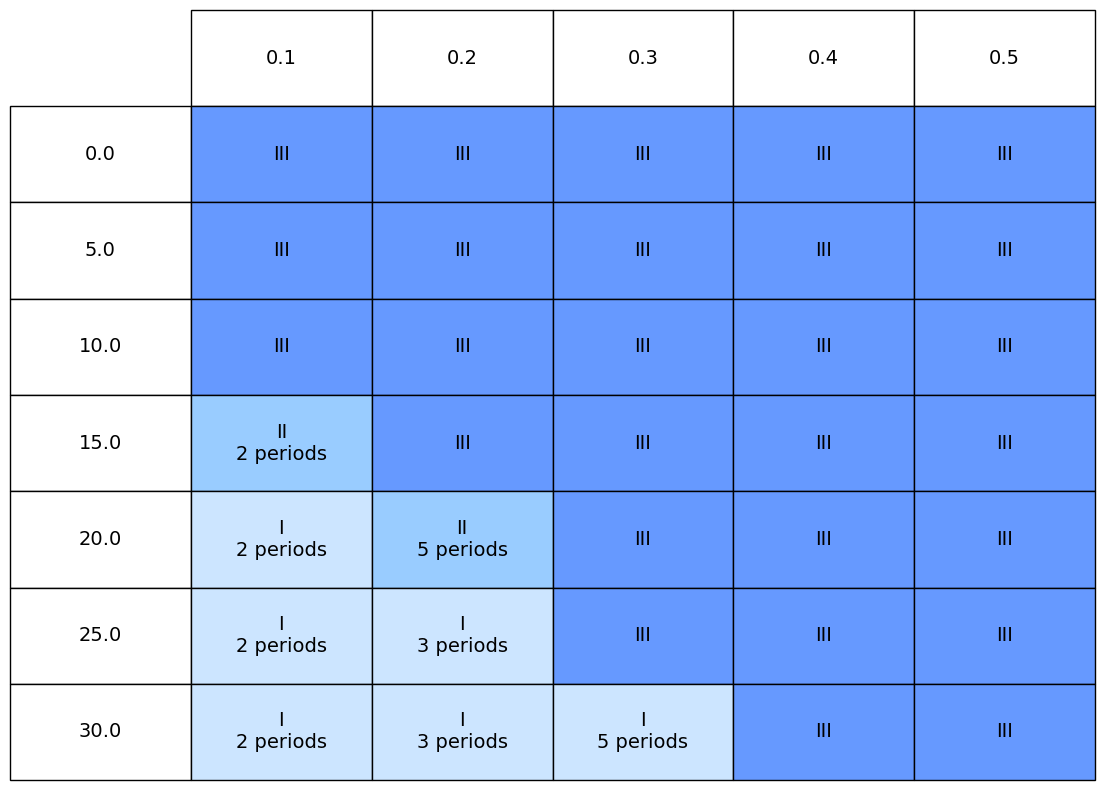}
        \caption{$P=10$}
        \label{tab:five}
    \end{subtable}
    \caption{Equilibrium types for different values of the initial treasury (vertical) and forking threshold (horizontal) and Mechanism 3.}
\end{table}

\newpage
\section{Proofs}
\subsection{Proof of Lemma 1:}

\begin{proof}
    We prove the statements in turn. 

\begin{enumerate}[i.]
    \item Let $d\mathcal{B}_{-i,t}$ denote the density of $\tilde{b}_{-i,t}$. The expected return for nouner $i$ of playing a strategy of bidding up to $\tilde{b}_{i,t}$ is given by
    \begin{equation}
        \int_{0}^{\tilde{b}_{i,t}} (v_{i,t}-\tilde{b}_{-i,t})d \mathcal{B}_{-i,t}.
    \end{equation}
    This expression is maximized at $\tilde{b}_{i,t} = v_{i,t}$ for any strategies of all players other than $i$ that give rise to the density $d \mathcal{B}_{-i,t}$.
    \item Fix a $V^e_t(h_t)$ and $\tilde{b}_{a,t}(h_t)$. First, the arbitrageur's utility function directly implies that holding the bid $\tilde{b}_{a,t}(h_t)$ constant, utility is increasing in $V^e_t(h^t)$, implying a maximum bid value $\hat{b}_t(h_t)$ exists and that $\tilde{b}_{a,t}(h_t) = \hat{b}_t(h_t)$ in equilibrium. Second, consider that an arbitrageur's bid has two effects on her payoff: changing the probability of winning and affecting the redemption value by changing today's auction price and future arbitrageur bids. Thus, if today's auction price is independent of an arbitrageur's bid, then so is $V^e_t(h_t)$. This is the case if $\hat{b}_t(h_t) > v_{n,t}$ or if $\hat{b}_t(h_t) \leq v_{n-1,t}$, where $v_{n,t}$ and $v_{n-1,t}$ are the highest and second-highest valuations of nouners at history $h_t$, respectively. By statement (i) and the definition of the arbitrageurs payoff, it then follows that $\hat{b} \geq \min\{v_{n-1,t}, V^e_t(h_t)\}$. \qed
\end{enumerate}
\end{proof}

\subsection{Proof of Lemma 2:}
\begin{proof}
Follows directly from the definition of the forking threshold $\kappa$ and the fact that by Lemma \ref{lem:bidding_strat} the expected number of arbitrageur wins is determined by the sum over all periods $\tau \in t, ..., t^\ast$ of the respective probability that $v_{n,\tau} \leq \hat{b}_{\tau}(h_\tau)$.\qed
\end{proof}

\subsection{Proof of Lemma 3:}

\begin{proof}
Consider an arbitrageurs' payoff, which can be rewritten by Lemma \ref{lem:bidding_strat} as follows
\begin{subeqnarray}
u_{a,t}(\tilde{b}_{a,t},h_t) &=& Prob[\tilde{b}_{a,t} > v_{n,t} | h_t] (V^e_t(h_t) - E[v_{n,t}]) \\
&=& \int_{0}^{\hat{b}_{a,t}} \left( V^e_t(h_t) - v \right) d\mathcal{B}_{-a,t} \\
&=& \int_{0}^{\hat{b}_{a,t}} \left( V^e_t(h_t) - v \right) n F(v)^{n-1} f(v) \, dv 
,
\end{subeqnarray} 
where $v_{n,t}$ denotes the highest drawn value among all nouners $i \in 1,...,n$ at history $h_t$ and $d\mathcal{B}_{-a,t}$ the density of the highest nouner valuation $v_{n,t}$.
Differentiating with respect to the arbitrageurs maximum bid $\hat{b}_{a,t}(h_t)$, we obtain
\begin{subeqnarray} 
\left( V^e_t(h_t) - \hat{b}_{a,t} \right) n F(\hat{b}_{a,t})^{n-1} f(\hat{b}_{a,t}) \quad \quad \quad \quad \quad \quad \quad \quad \quad & \\ + \ n \frac{\partial V^e_t(h_t)}{\partial \hat{b}_{a,t}} \int_{0}^{\hat{b}_{a,t}} n F(v)^{n-1} f(v) \, dv &= 0 \\
\left( V^e_t(h_t) - \hat{b}_{a,t} \right) n F(\hat{b}_{a,t})^{n-1} f(\hat{b}_{a,t}) +  \frac{\partial V^e_t(h_t)}{\partial \hat{b}_{a,t}} \frac{1}{n}F(\hat{b}_{a,t})^n &= 0 \\
V^e_t(h_t) -\hat{b}_{a,t} + \frac{\frac{\partial V^e_t(h_t)}{\partial \hat{b}_{a,t}} F(\hat{b}_{a,t})^n}{n F(\hat{b}_{a,t})^{n-1} f(\hat{b}_{a,t})} &=  0
\end{subeqnarray} which solves for
\begin{equation}
b^\ast_{t} = V^e_t(h_t) + \frac{1}{n} \frac{ F(b^\ast_{t})}{f(b^\ast_{t})} \frac{\partial V^e_t(h_t)}{\partial b^\ast_{t}}. 
\end{equation} \qed
\end{proof}

\subsection{Proof to Proposition 1:}
\begin{proof}
Fix an equilibrium. Let $b_t^\ast (h_t)$ denote the optimal maximum bid of an arbitrageur at history $h_t$ in equilibrium. Note first that arbitrageur strategies are symmetric. At any history $h_t$ and $t < \tau \leq t^\ast$ we must therefore have that $\mathbb{E}_t[b^\ast_\tau(h_\tau)] = b_t^\ast(h_t) \times g(\alpha,\tau, h_t)$, where $g(\alpha,\tau, h_t)$ is a weakly positive, continuous function. Formally, we can derive $g(\alpha,\tau, h_t)$ by considering the ratio
$$
\frac{b_t^\ast (h_t)}{\mathbb{E}[b_{\tau}^\ast]} = \frac{V^e_t(h_t) + \frac{1}{n} \frac{F(b_t^\ast(h_t))}{f(b_t^\ast(h_t))} \frac{\partial V^e_{t}(h_t)}{\partial b_t^\ast(h_t)}} {V^e_{\tau}(h_t) + \frac{1}{n} \frac{F(\mathbb{E}[b_{\tau}^\ast])}{f(\mathbb{E}[b_{\tau}^\ast])} \frac{\partial V^e_{\tau}(h_t)}{\partial \mathbb{E}[b_{\tau}^\ast]}}.
$$ We start with the redemption value $V^e_t(h_t)$, or
$$
V^e_t(h_t) = \delta^{t^\ast-t} \alpha_{t^\ast} \left( S_{t-1} + \sum_{\phi = t}^{t^\ast} \mathbb{E}[p_\phi] \right),
$$ where
\begin{subeqnarray}
\mathbb{E}[p_{\phi}] &=& Prob[v_{(n-1)} > b_\phi^\ast] \cdot \mathbb{E}[v_{(n-1)} \vert v_{(n-1)}\geq b_\phi^\ast]\\
&+& Prob[v_{(n-1)} \leq b_\phi^\ast < v_{(n)}] \cdot b_\phi^\ast\\
&+& Prob[v_{(n)} \leq b_\phi^\ast] \cdot \mathbb{E}[v_{(n)} \vert v_{(n)}\leq b_\phi^\ast].
\end{subeqnarray} The probabilities are given by
\begin{subeqnarray}
Prob[v_{(n-1)} > b_\phi^\ast] &=& 1 - n[F(b_\phi^\ast)]^{n-1}[1 - F(b_\phi^\ast)] - [F(b_\phi^\ast)]^n,\\
Prob[v_{(n-1)} \leq b_\phi^\ast < v_{(n)}] &=& n[F(b_\phi^\ast)]^{n-1}[1 - F(b_\phi^\ast)], \\
Prob[v_{(n)} \leq b_\phi^\ast] &=& [F(b_\phi^\ast)]^n,
\end{subeqnarray} and the expected values of the second-highest and highest bids are
\begin{subeqnarray}
\mathbb{E}[v_{(n-1)} \vert v_{(n-1)}\geq b_\phi^\ast] &=& \frac{\int_{b_{\phi}^\ast}^{\bar{v}} v \cdot n (n-1) [F(v)]^{n-2} [1 - F(v)] f(v) \, dv}{1 - n[F(b_\phi^\ast)]^{n-1}[1 - F(b_\phi^\ast)] - [F(b_\phi^\ast)]^n}, \\
\mathbb{E}[v_{(n)} \vert v_{(n)}\leq b_\phi^\ast] &=& \frac{\int_0^{b_\phi} v \cdot n [F(v)]^{n-1} f(v) \, dv}{[F(b_\phi^\ast)]^n},
\end{subeqnarray} so that we obtain
\begin{subeqnarray}
V^e_t(h_t) &=& \delta^{t^\ast-t} \alpha_{t} \biggl( S_{t-1} + \sum_{\phi=t}^{t^\ast} \biggl[ \mathbb{E}[b_\phi^\ast] \cdot nF\left(\mathbb{E}[b_\phi^\ast] \right)^{n-1} \left(1 - F\left(\mathbb{E}[b_\phi^\ast] \right)\right) \biggr.  \\
&+& \int_{b_\phi^\ast}^{\bar{v}} v \cdot n(n-1)F(v)^{n-2}(1 - F(v)) f(v) \, dv \\
&+& \biggl. \int_0^{b_\phi^\ast} v \cdot nF(v)^{n-1} f(v) \, dv \biggr] \biggr).
\end{subeqnarray}
Differentiating the price $\mathbb{E}[p_\phi]$ with respect to $\mathbb{E}[b_\phi^\ast]$, we find
\begin{subeqnarray}
&& \frac{\partial \mathbb{E}[p_\phi]}{\partial \mathbb{E}[b_\phi^\ast]} \left( \mathbb{E}[b_\phi^\ast] \cdot n \cdot F(\mathbb{E}[b_\phi^\ast])^{n-1} \cdot (1 - F(\mathbb{E}[b_\phi^\ast])) \right)\\ 
&+& \frac{\partial}{\partial \mathbb{E}[b_\phi^\ast]} \int_0^{\mathbb{E}[b_\phi^\ast]} v \cdot nF(v)^{n-1} f(v) \, dv \\
&+& \frac{\partial}{\partial \mathbb{E}[b_\phi^\ast]} \int_{\mathbb{E}[b_\phi^\ast]}^{\bar{v}} v \cdot n(n-1)F(v)^{n-2}(1 - F(v)) f(v) \, dv \\[1em]
&=& n \cdot F(\mathbb{E}[b_\phi^\ast])^{n-1} \cdot (1 - F(\mathbb{E}[b_\phi^\ast])) \\ 
&+& \mathbb{E}[b_\phi^\ast] \cdot n \cdot f(\mathbb{E}[b_\phi^\ast]) \left( (n-1) F(\mathbb{E}[b_\phi^\ast])^{n-2} (1 - F(\mathbb{E}[b_\phi^\ast])) - F(b_t^\ast)^{n-1} \right)\\
&+& \mathbb{E}[b_\phi^\ast] \cdot nF(\mathbb{E}[b_\phi^\ast])^{n-1} f(\mathbb{E}[b_\phi^\ast])\\
&-& \mathbb{E}[b_\phi^\ast] \cdot n(n-1)F(\mathbb{E}[b_\phi^\ast])^{n-2}(1 - F(\mathbb{E}[b_\phi^\ast])) f(\mathbb{E}[b_\phi^\ast]) \\[1em]
&=& n \cdot F(\mathbb{E}[b_\phi^\ast])^{n-1} \cdot (1 - F(\mathbb{E}[b_\phi^\ast])) \\[1em]
&\geq& 0,
\end{subeqnarray} where the final inequality follows from the definition of the CDF $F(\cdot)$, which satisfies $0 \leq F(b_\phi^\ast) \leq 1$. Now, finally substituting into the ratio we are interested in, we obtain
\begin{subeqnarray}
&& \tfrac{b_t^\ast (h_t)}{\mathbb{E}[b_{\tau}^\ast]}\\ &=&  \tfrac{\delta^{t^* - t} \alpha_t \left( S_t + \sum_{\phi = t}^{t^*} \mathbb{E}[p_\phi] \right) + \tfrac{1}{n} \tfrac{F(b_t^*)}{f(b_t^*)} \delta^{t^* - t} \alpha_t \sum_{\phi = t}^{t^*} \left( n F(b_\phi^*)^{n-1} (1 - F(b_\phi^*)) \right)}{\delta^{t^* - \tau} \alpha_{\phi} \left( S_t + \sum_{\tau = \tau}^{t^*} \mathbb{E}[p_\tau] \right) + \tfrac{1}{n} \tfrac{F(\mathbb{E}[b_{\tau}^*])}{f(\mathbb{E}[b_{\tau}^*])} \delta^{t^* - \tau} \alpha_{\tau} \sum_{\tau = \tau}^{t^*} \left( n F(\mathbb{E}[b_{\tau}^\ast])^{n-1} (1 - F(\mathbb{E}[b_{\tau}^\ast])) \right)} \\[1em]
&=& \tfrac{\delta^{t^\ast-t}}{\delta^{t^\ast-\tau}} \tfrac{\alpha_t}{\alpha_\tau} \left[ \tfrac{\left( S_t + \sum_{\tau = t}^{t^*} \mathbb{E}[p_\tau] \right) + \tfrac{1}{n} \tfrac{F(b_t^*)}{f(b_t^*)} \sum_{\tau = t}^{t^*} \left( n F(b_t^*)^{n-1} (1 - F(b_t^*)) \right)}{ \left( S_t + \sum_{\tau = \tau}^{t^*} \mathbb{E}[p_\tau] \right) + \tfrac{1}{n} \tfrac{F(\mathbb{E}[b_{\tau}^*])}{f(\mathbb{E}[b_{\tau}^*])} \sum_{\tau = \tau}^{t^*} \left( n F(\mathbb{E}[b_{\tau}^\ast])^{n-1} (1 - F(\mathbb{E}[b_{\tau}^\ast])) \right)} \right] \\
&=& \frac{\delta^{t^\ast-t}}{\delta^{t^\ast-\tau}} \frac{\alpha_t}{\alpha_\tau} \times a_{t,\tau} \\
&=&  \frac{1}{g(\alpha, \tau, h_t)} \\[1em]
&\geq& 0
\end{subeqnarray} where the final inequality follows from $a_{t,\tau} \geq 0$, where $a_{t,\tau}$ is defined as the term in square brackets and all of its elements are weakly positive.

Now consider the necessary condition for a fork to occur in Lemma \ref{lem:forking}. Replacing the cutoff bids with the increase of expected optimal bids by arbitrageurs, we obtain that the expected time to fork at $h_t$ is given by
\begin{equation}
\sum_{\tau = t}^{t^\ast} \left[F\left(b_1^\ast \cdot \frac{1}{\delta^{\tau-t}}\frac{\alpha_\tau}{\alpha_t} \frac{1}{a_{t,\tau}}\right)\right]^n \geq \kappa (N + t^\ast) - A_t,
\end{equation} which must be satisfied for a fork to occur in expectation at $h_1$ and thus at any $h_t$. Moreover, as $\partial V_t^e(h_t) / \partial \mathbb{E}[p_\tau] \geq 0$, we find that
\begin{subeqnarray}
    \frac{\partial V_t^e(h_t)}{\partial \mathbb{E}[b_\tau^\ast]} &=& \delta^{t^\ast-t} \alpha_{t} \sum_{\tau=t}^{t^\ast} \biggl( n F\left(\frac{b_t^\ast}{\delta^{\tau-t}} \frac{\alpha_\tau}{\alpha_t} \frac{1}{a_{t,\tau}} \right)^{n-1} \biggr. \\ &\cdot& \biggl. \left(1 - F\left(\frac{b_t^\ast}{\delta^{\tau-t}} \frac{\alpha_\tau}{\alpha_t} \frac{1}{a_{t,\tau}} \right) \right) \biggr)\\ &\geq& 0.
\end{subeqnarray}

By Lemma 3 we therefore find that if $b_t^\ast \in [0,\bar{v}]$, then $b_t^\ast \geq V^e_t(h_t)$.  In conjunction, this yields the conditions for Type II and Type III in Proposition~\ref{prop:equilibrium} and the sufficient condition for Type II.

Lastly, consider the case when $b_1^\ast \notin (0,\bar{v}]$. First, we know by Lemma \ref{lem:arbitrageur_bid} and the fact that $\partial V^e_t(h_t)/\partial \mathbb{E}[p_\tau] > 0$ that if $V^e_t(h_t) = \bar{v}$, then $b_1^\ast \geq \bar{v}$. We thus find that if 
\begin{equation}
V^e_t(h_t) = \delta^{T-t} \alpha_t S_0 \geq \bar{v} \ \Rightarrow \ S_0 \geq \frac{\bar{v}}{\delta^{T-t} \alpha_t}
\end{equation} and
\begin{equation}
T \geq \kappa (N+T) \ \Rightarrow \ T \geq \frac{\kappa}{1-\kappa}N
\end{equation} then $b_t^\ast(h_t) \geq \bar{v}$. This completes the proof for Type I.
\qed

\end{proof}

\subsection{Proof of Corollary 1:}
\begin{proof}
Consider the redemption value, which under Mechanism \ref{mech:four} and given a bid at time $t$ and history $h_t$ becomes
\begin{subeqnarray}
    V^e_t(h_t) &=& \delta^{t^\ast -t} \alpha_{j,t}(h_t) \left( S_0 +  \sum_{\tau=t}^{t^\ast} \mathbb{E}\left[p_\tau \right] \right) \\
&=& \delta^{t^\ast - t} \frac{p_{j=t}}{S_0 + \sum_{\tau=t}^{t^\ast} \mathbb{E}\left[p_\tau \right]} \left( S_0 +  \sum_{\tau=t}^{t^\ast} \mathbb{E}\left[p_\tau \right] \right) \\
&=&p_{j=t} \delta^{t^\ast - t}.
\end{subeqnarray}
where without loss of generality we set $\alpha_{j,t}(h_t) = p_j/S_t$, as $S_0 \nless 0$. An arbitrageur's utility, therefore, becomes
\begin{subeqnarray}
    u_{a,t}(b^\ast_t(h_t),h_t) &=& Prob[b^\ast_t(h_t) > v_{n,t}] (V^e_t(h_t) - p_{j=t}) \\
&=& Prob[b^\ast_t(h_t) > v_{n,t}] (p_{j=t}(\delta - 1) ) \\
&<& 0 \ \ \forall \ b_t^\ast(h_t) > 0,
\end{subeqnarray}
implying that in equilibrium $b_t^\ast(h_t) = 0$.
\qed
\end{proof}

\subsection{Proof of Proposition 2:}

\begin{proof}
Assume $\kappa = 0$. Consider the arbitrageur's expected payoff, which then simplifies to 
\begin{subeqnarray}
    u_{a,t}(b^\ast,h_t) &= \int_{0}^{b^\ast} \left( V(h_t) - v \right) n F(v)^{n-1} f(v) \, dv,
\end{subeqnarray}
and solves for
\begin{subeqnarray}
b^\ast(h_t) &= V(h_t) + \frac{1}{n} \frac{ F(b^\ast(h_t))}{f(b^\ast(h_t))} \frac{\partial V(h_t)}{\partial b^\ast(h_t)}.
\end{subeqnarray} We know by the proof of Proposition \ref{prop:equilibrium} that 
 $b^\ast(h_t) \geq V(h_t)$. 
We thus find that in expectation, the arbitrageur will win the auction if 
\begin{equation}
    V(h_t) > v_{n,t}. 
\end{equation}
Because a fork occurs each time an arbitrageur wins, the treasury will shrink at time $t$ and history $h_t$ in expectation if
\begin{equation}
    \alpha_t F(b^\ast(h_t))^n (S_{t-1} + \mathbb{E}[p_t]) > \mathbb{E}[p_t],
\end{equation}
or
\begin{equation}
S_{t-1} >\left( \frac{1}{\alpha_t F(b^\ast(h_t))^n} -1 \right) \mathbb{E}[p_t].
\end{equation} Now note that as $S_{t-1} \rightarrow 0$, this condition becomes impossible to fulfill, implying that the treasury must increase. Similarly observe that as $S_{t-1} \rightarrow 1/\alpha_t \times \bar{v}$, the condition is guaranteed to be fulfilled because $p_t \leq \bar{v}$ and $F(b^\ast(h_t)) \rightarrow 1$ as $S_{t-1} \rightarrow 1/\alpha_t \times \bar{v}$. Finally, note that we know from the proof of Proposition \ref{prop:equilibrium} that $\partial \mathbb{E}[p_t]/\partial b^\ast(h_t)$ is a continuous function which, in conjunction with the definition of $F(v)$ implies that there is a single solution for a bid that leaves the treasury unchanged in expectation. \qed
\end{proof}

\subsection{Proof of Corollary 2:}

\begin{proof}
We prove the statements in turn. Consider that under Mechanism \ref{mech:three}, an arbitrageur has a certain positive profit if he bids $b_t = \bar{v}$ if $S_{t-1}>P_{t-1}$, as 
\begin{equation}
    \frac{\mathbb{E}[p_t]}{P_{t-1} +\mathbb{E}[p_t]} (S_{t-1} +  \mathbb{E}[p_t]) -  \mathbb{E}[p_t] > 0 \quad \forall \ S_{t-1}>P_{t-1}
\end{equation}
where $\mathbb{E}[p_t] = \mathbb{E}[p_t \vert b_t = \bar{v}]$. Under Mechanism \ref{mech:four}, in turn, in any equilibrium, the arbitrageur plays $b_t^\ast(h_t) = 0$, as
\begin{subeqnarray}
F(b_t)^n && \min \left\{ \frac{\mathbb{E}[p_t]}{S_{t-1} + \mathbb{E}[p_t]}, \frac{\mathbb{E}[p_t]}{P_{t-1} + \mathbb{E}[p_t]} \right\}  (S_t + \mathbb{E}[p_t]) - \mathbb{E}[p_t] \\
& \leq & \min \left\{ \frac{\mathbb{E}[p_t]}{S_{t-1} + \mathbb{E}[p_t]}, \frac{\mathbb{E}[p_t]}{P_{t-1} + \mathbb{E}[p_t]} \right\}  (S_t + \mathbb{E}[p_t]) - \mathbb{E}[p_t] \\
& \leq & \frac{\mathbb{E}[p_t]}{S_{t-1} + \mathbb{E}[p_t]} (S_t + \mathbb{E}[p_t]) - \mathbb{E}[p_t] \\
&=& 0.
\end{subeqnarray} where $\mathbb{E}[p_t] = \mathbb{E}[p_t \vert b_t]$ for any $b_t \in [0,\bar{v}]$. \qed
\end{proof}

\subsection{Proof of Proposition 3:}

\begin{proof}
Consider the forking condition
\begin{equation}
\sum_{\tau = t}^{t^\ast} \left[F\left(b_1^\ast \cdot \frac{1}{\delta^{\tau-t}}\right)\right]^n \geq \kappa (N + t^\ast) - A_t,
\end{equation} optimal bid
\begin{equation}
b_t^\ast(h_t) = V^e_t(h_t) + \frac{1}{n} \frac{ F(b_t^\ast(h_t))}{f(b_t^\ast(h_t))} \frac{\partial V^e_t(h_t)}{\partial b_t^\ast(h_t)},
\end{equation} and redemption value under the committed spending path
\begin{equation}
V^e_t(h_t) = \delta^{t^\ast-t} \alpha_{t^\ast} \left( S_{t-1} + \sum_{\tau = t}^{t^\ast} \left( \mathbb{E}[p_\tau] -  k S_{\tau-1} e^{-\lambda(\tau-1)} \right) \right).
\end{equation} Observe that 
\begin{equation}\frac{\partial V^e_t(h_t)}{\partial z_t(S_{t-1})} < 0, \qquad \frac{\partial b^\ast_t(h_t)}{\partial V^e_t(h_t)} > 0, \qquad \frac{\partial 
\sum_{\tau = t}^{t^\ast} \left[F\left(b_1^\ast \cdot \frac{1}{\delta^{\tau-t}}\right)\right]^n}{\partial b_1^\ast} > 0,
\end{equation} which immediately delivers the result. \qed

\end{proof}

\end{document}